\def\renewtheorem#1{%
  \expandafter\let\csname#1\endcsname\relax
  \expandafter\let\csname c@#1\endcsname\relax
  \gdef\renewtheorem@envname{#1}
  \renewtheorem@secpar
}
\def\renewtheorem@secpar{\@ifnextchar[{\renewtheorem@numberedlike}{\renewtheorem@nonumberedlike}}
\def\renewtheorem@numberedlike[#1]#2{\newtheorem{\renewtheorem@envname}[#1]{#2}}
\def\renewtheorem@nonumberedlike#1{  
\def\renewtheorem@caption{#1}
\edef\renewtheorem@nowithin{\noexpand\newtheorem{\renewtheorem@envname}{\renewtheorem@caption}}
\renewtheorem@thirdpar
}
\def\renewtheorem@thirdpar{\@ifnextchar[{\renewtheorem@within}{\renewtheorem@nowithin}}
\def\renewtheorem@within[#1]{\renewtheorem@nowithin[#1]}
\newtheoremstyle{framedthmenv}%
  {0cm}
  {0cm}
  {\@acmdefinitionbodyfont}
  {\@acmdefinitionindent}
  {\@acmdefinitionheadfont}
  {:}
  {.5em}
  {\thmname{#1}\thmnumber{ #2}\thmnote{ {\@acmdefinitionnotefont(#3)} }}
\theoremstyle{acmplain}
\theoremstyle{acmdefinition}
\theoremstyle{framedthmenv}
\newtheorem{problem}{Problem} 
\crefname{problem}{Problem}{Problems}
\Crefname{problem}{Problem}{Problems}
\newtheorem{algorithm}{Algorithm} 
\crefname{algorithm}{Algorithm}{Algorithms}
\Crefname{algorithm}{Algorithm}{Algorithms}
\theoremstyle{acmplain}
\newcommand{\bigO}[1]{O(#1)} 
\newcommand{\bigOPar}[1]{O\!\left(#1\right)} 
\newcommand{\softO}[1]{O\tilde{~}(#1)} 
\newcommand{\expmatmul}{\omega} 
\newcommand{\costLQUP}[2]{\mathcal{C}(#1,#2)} 
\newcommand{\ZZ}{\mathbb{Z}} 
\newcommand{\ZZp}{\mathbb{Z}_{> 0}} 
\newcommand{\tuple}[1]{\mathbf{#1}}  
\newcommand{\var}{X} 
\newcommand{\field}{\mathbb{K}} 
\newcommand{\polRing}{\field[\var]} 
\newcommand{\rdim}{m} 
\newcommand{\cdim}{n} 
\newcommand{\storeArg}{} 
\newcommand{\matSpace}[1][\rdim]{\renewcommand\storeArg{#1}\matSpaceAux} 
\newcommand{\polMatSpace}[1][\rdim]{\renewcommand\storeArg{#1}\polMatSpaceAux} 
\newcommand{\matSpaceAux}[1][\storeArg]{\field^{\storeArg \times #1}} 
\newcommand{\polMatSpaceAux}[1][\storeArg]{\polRing^{\storeArg \times #1}} 
\newcommand{\row}[1]{ \mathbf{\MakeLowercase{#1}} } 
\newcommand{\col}[1]{ \mathbf{\MakeLowercase{#1}} } 
\newcommand{\mat}[1]{ \mathbf{\MakeUppercase{#1}} } 
\newcommand{\matt}[1]{\mathbf{ \hat{\MakeUppercase{#1}} }} 
\newcommand{\matz}{\mat{0}} 
\newcommand{\sumVec}[1]{|#1|} 
\newcommand{\trsp}[1]{#1^\mathsf{T}} 
\newcommand{\matrow}[2]{{#1}_{#2,*}} 
\newcommand{\matcol}[2]{{#1}_{*,#2}} 
\newcommand{\submat}[3]{{#1}_{#2,#3}} 
\newcommand{\diag}[1]{\mathrm{diag}(#1)}  
\newcommand{\idMat}[1][\rdim]{\mat{I}_{#1}} 
\newcommand{\rank}[1]{\operatorname{rank}(#1)}  
\newcommand{\rdeg}[2][]{\mathrm{rdeg}_{{#1}}(#2)} 
\newcommand{\cdeg}[2][]{\mathrm{cdeg}_{{#1}}(#2)} 
\newcommand{\xDiag}[1]{\mat{\var}^{#1\,}} 
\newcommand{\shiftSpace}[1][\rdim]{\ZZ^{#1}} 
\newcommand{\shift}[2][s]{#1_{#2}} 
\newcommand{\shifts}[1][s]{\mathbf{#1}} 
\newcommand{\size}[1]{\mathrm{Size}(#1)} 
\newcommand{\vsdim}{D} 
\newcommand{\order}{d} 
\newcommand{\orders}{\tuple{\order}} 
\newcommand{\orderSpace}[1][\nbeq]{\ZZp^{#1}} 
\newcommand{\nbeq}{\cdim} 
\newcommand{\nbun}{\rdim} 
\newcommand{\sys}{\mat{F}} 
\newcommand{\sysSpace}{\polMatSpace[\nbun][\nbeq]} 
\newcommand{\app}{\row{p}} 
\newcommand{\appSpace}{\polMatSpace[1][\nbun]} 
\newcommand{\appbas}{\mat{P}} 
\newcommand{\appbasSpace}{\polMatSpace[\nbun]} 
\newcommand{\mods}[1][\orders]{\mat{\var}^{#1\,}} 
\newcommand{\modAppCustom}[2]{{\mathcal A}_{#1}(#2)}
\newcommand{\modApp}{{\mathcal A}_{\orders}(\sys)}
\newcommand{\cmat}{\mat{C}} 
\newcommand{\cmatSpace}{\matSpace[\nbun][\nbeq]} 
\newcommand{\syss}{\mat{H}} 
\newcommand{\truncOrd}{t} 
\newcommand{\truncOrds}{\tuple{\truncOrd}} 
\newcommand{\truncOrdSpace}[1][\nbeq]{\ZZp^{#1}} 
\newcommand{\fsubset}{S} 
\newcommand{\rhs}{\mat{G}} 
\newcommand{\rpol}{f} 
\newcommand{\evpt}{\alpha} 
\newcommand{\rem}{\mathbin{\mathrm{rem}}} 
\newcommand{\dmax}{\delta} 
\newcommand{\uvec}{\row{u}} 
\newcommand{\uvecSpace}{\matSpace[1][\nbun]} 
\newcommand{\sumOrds}{\sumVec{\truncOrds}} 
\newcommand{\hvec}[1]{\row{c}_{#1}} 
\newcommand{\largeRows}[1]{\mathcal{R}_{#1}} 
\newcommand{\largeCols}[1]{\mathcal{C}_{#1}} 
\newcommand{\largeOrders}[1]{\mathcal{D}_{#1}} 
\newcommand{\algoname}[1]{{\normalfont\textsc{#1}}}
\newcommand{\problemname}[1]{{\normalfont\textsc{#1}}}
\newcommand{\algoword}[1]{\emph{\textsf{#1}}}
\newcommand{\assign}{\leftarrow}
\newcommand{\inlcomment}[1]{\texttt{\small/* #1 */}}
\newcommand{\eolcomment}[1]{\hfill\texttt{\small// #1}}
\newenvironment{algobox}
{
  \newcommand{\algoInfo}[2]{
    \begin{algorithm}
    \label{##2}
    \emph{\algoname{##1}}
  }
  \newcommand{\dataInfos}[2]{
    \algoword{##1:}
      \begin{itemize}[leftmargin=0.8cm]
          ##2
      \end{itemize}
  }
  \newcommand{\dataInfo}[2]{
    \algoword{##1:} ##2 }
  \newcommand{\algoSteps}[1]{
    \setlist[enumerate,1]{leftmargin=0.5cm}
    \setlist[enumerate,2]{leftmargin=0.4cm}
    \setlist[enumerate,3]{leftmargin=0.4cm}
    \begin{enumerate}[label=\textbf{\arabic*.}]
        ##1
    \end{enumerate}
  }
  \begin{figure}[ht]
  \centering
  \addtolength\fboxsep{0.1cm}
  \begin{boxedminipage}{0.99\columnwidth}
}
{
  \end{algorithm}
  \end{boxedminipage}
  \end{figure}
}
\newenvironment{problembox}{
  \newcommand{\problemInfo}[2]{
    \begin{problem}
    \label{##2}
    \problemname{##1}
  }
  \newcommand{\dataInfos}[2]{
    \emph{##1:}
      \begin{itemize}[leftmargin=0.8cm]
          ##2
      \end{itemize}
  }
  \newcommand{\dataInfo}[2]{
   \emph{##1:} 
      \begin{itemize}[leftmargin=0.8cm]
          \item ##2
        \end{itemize}
  }

  \begin{figure}[!ht]
  \centering
  \addtolength\fboxsep{0.1cm}
  \begin{boxedminipage}{0.99\columnwidth}
  }
  {
  \end{problem}
  \end{boxedminipage}
  \end{figure}
}
\newcommand{\myparagraph}[1]{ \paragraph{\hspace{-0.35cm}\textbf{#1}} }  
\begin{document}

\fancyhead{} 
\title{Certification of Minimal Approximant Bases}

\author{Pascal Giorgi}
\affiliation{%
  \institution{LIRMM, Universit\'e de Montpellier, CNRS}
  \city{Montpellier}
  \state{France}
}
\email{pascal.giorgi@lirmm.fr}

\author{Vincent Neiger}
\affiliation{%
  \institution{Univ. Limoges, CNRS, XLIM, UMR\,7252}
  \city{F-87000 Limoges} 
  \state{France} 
}
\email{vincent.neiger@unilim.fr}

\copyrightyear{2018}
\acmYear{2018}
\setcopyright{acmlicensed}
\acmConference[ISSAC '18]{2018 ACM International Symposium on Symbolic and Algebraic Computation}{July 16--19, 2018}{New York, NY, USA \\ \phantom{bla} \\ \phantom{bla} \\ \phantom{bla}}
\acmBooktitle{ISSAC '18: 2018 ACM International Symposium on Symbolic and Algebraic Computation, July 16--19, 2018, New York, NY, USA}
\acmPrice{15.00}
\acmDOI{10.1145/3208976.3208991}
\acmISBN{978-1-4503-5550-6/18/07}

\begin{abstract}
  For a given computational problem, a certificate is a piece of data that one
  (the \emph{prover}) attaches to the output with the aim of allowing efficient
  verification (by the \emph{verifier}) that this output is correct.  Here, we
  consider the minimal approximant basis problem, for which the fastest known
  algorithms output a polynomial matrix of dimensions $\nbun\times\nbun$ and
  average degree $\vsdim/\nbun$ using $\softO{\nbun^\expmatmul
  \frac{\vsdim}{\nbun}}$ field operations.  We propose a certificate which, for
  typical instances of the problem, is computed by the prover using
  $\bigO{\nbun^\expmatmul \frac{\vsdim}{\nbun}}$ additional field operations
  and allows verification of the approximant basis by a Monte Carlo algorithm
  with cost bound $\bigO{\nbun^\expmatmul + \nbun\vsdim}$.

  Besides theoretical interest, our motivation also comes from the fact that
  approximant bases arise in most of the fastest known algorithms for linear
  algebra over the univariate polynomials; thus, this work may help in
  designing certificates for other polynomial matrix computations.
  Furthermore, cryptographic challenges such as breaking records for discrete
  logarithm computations or for integer factorization rely in particular on
  computing minimal approximant bases for large instances: certificates can
  then be used to provide reliable computation on outsourced and error-prone
  clusters.
\end{abstract}

%

\keywords{Certification; minimal approximant basis; order basis; polynomial
matrix; truncated product.}

\maketitle

\section{Introduction}
\label{sec:intro}

\myparagraph{Context.}

For a given tuple $\orders= (\order_1,\ldots,\order_\nbeq) \in \orderSpace$
called \emph{order}, we consider an $\nbun\times\nbeq$ matrix $\sys$ of formal
power series with the column $j$ truncated at order $\order_j$.  Formally, we
let $\sys \in \sysSpace$ be a matrix over the univariate polynomials over a
field $\field$, such that the column $j$ of $\sys$ has degree less than
$\order_j$.  Then, we consider the classical notion of minimal approximant
bases for $\sys$ \cite{BarBul92,BecLab94}.  An approximant is a polynomial row
vector $\app \in \appSpace$ such that
\begin{equation}
  \app \sys = \matz \bmod \mods,
  \quad \text{where}\;\;
  \mods = \diag{\var^{\order_1},\ldots,\var^{\order_\nbeq}};
\end{equation} 
here $\app \sys = \matz \bmod \mods$ means that $\app \sys = \col{q} \mods$ for
some $\col{q} \in \polMatSpace[1][\nbeq]$.  The set of all approximants forms a
(free) $\polRing$-module of rank $\nbun$,
\[
  \modApp = \left\{ \app \in \appSpace \,\bigm|\, \app \sys = \matz \bmod \mods \right\}.
\]
A basis of this module is called an \emph{approximant basis} (or sometimes an
\emph{order basis} or a \emph{$\sigma$-basis}); it is a nonsingular matrix in
$\appbasSpace$ whose rows are approximants in $\modApp$ and generate $\modApp$.

The design of fast algorithms for computing approximant bases has been studied
throughout the last three decades
\cite{BarBul92,BecLab94,GiJeVi03,Storjohann06,ZhoLab12,JeNeScVi16}.
Furthermore, these algorithms compute \emph{minimal} bases, with respect to
some degree measure specified by a shift $\shifts\in\shiftSpace$.  The best
known cost bound is $\softO{\nbun^{\expmatmul-1} \vsdim}$ operations in
$\field$ \cite{JeNeScVi16} where $\vsdim$ is the sum $\vsdim=\sumVec{\orders} =
\order_1+\cdots+\order_\nbeq$.
Throughout the paper, our complexity estimates will fit the algebraic RAM model
counting only operations in $\field$, and we will use $\bigO{n^\omega}$ to
refer to the complexity of the multiplication of two $\nbun\times\nbun$
matrices, where $\omega < 2.373$ \cite{CopWin90,LeGall14}.

Here, we are interested in the following question:
\begin{quote}
  \emph{How to efficiently certify that some approximant basis algorithm indeed
  returns an $\shifts$-minimal basis of $\modApp$?}
\end{quote}
Since all known fast approximant basis algorithms are deterministic, it might
seem that a posteriori certification is pointless.  In fact, it is an essential
tool in the context of unreliable computations that arise when one delegates
the processing to outsourced servers or to some large infrastructure that may
be error-prone.  In such a situation, and maybe before concluding a commercial
contract to which this computing power is attached, one wants to ensure that he
will be able to guarantee the correctness of the result of these computations.
Of course, to be worthwhile, the verification procedure must be significantly
faster than the original computation.

Resorting to such computing power is indeed necessary in the case of large
instances of approximant bases, which are a key tool within challenging
computations that try to tackle the hardness of some cryptographic protocols,
for instance those based on the discrete logarithm problem (e.g. El Gamal) or
integer factorization (e.g. RSA).  The computation of a discrete logarithm over
a $768$-bit prime field, presented in \cite{DLP2017}, required to compute an
approximant basis that served as input for a larger computation which took a
total time of 355 core years on a 4096-cores cluster.  The approximant basis
computation itself took 1 core year.  In this context, it is of great interest
to be able to guarantee the correctness of the approximant basis before
launching the most time-consuming step. 

Linear algebra operations are good candidates for designing fast verification
algorithms since they often have a cost related to matrix multiplication while
their input only uses quadratic space.  The first example one may think of is
linear system solving.  Indeed, given a solution vector $\col{x}\in\field^n$ to
a system $\mat{A} \col{x} = \col{b}$ defined by $\mat{A}\in\field^{n\times n}$
and $\col{b} \in\field^{n}$, one can directly verify the correctness by
checking the equations at a cost of $\bigO{n^2}$ operations in $\field$.
Comparatively, solving the system with the fastest known algorithm costs
$\bigO{n^\omega}$.

Another famous result, due to Freivalds \cite{Freivalds1979}, gives a method to
verify a matrix product. Given matrices $\mat{A},\mat{B},\mat{C}\in
\matSpace[n]$, the idea is to check $\row{u}\mat{C}= (\row{u}\mat{A})\mat{B}$
for a random row vector $\row{u}\in\{0,1\}^{1\times n}$, rather than
$\mat{C}=\mat{A}\mat{B}$.  This verification algorithm costs $\bigO{n^2}$ and
is \emph{false-biased one-sided} Monte-Carlo (it is always correct when it
answers ``false''); the probability of error can be made arbitrarily small by
picking several random vectors.

In some cases, one may require an additional piece of data to be produced
together with the output in order to prove the correctness of the result.  For
example, Farkas' lemma \cite{Farkas1902} certifies the infeasibility of a
linear program thanks to an extra vector.  Although the verification is
deterministic in this example, the design of certificates that are verified by
probabilistic algorithms opened a line of work for faster certification methods in
linear algebra \cite{KalNehSau:2011,Kaltofen2012,DumKal:2014,DuKaThVi:2016}. 

In this context, one of the main challenges is to design \emph{optimal}
certificates, that is, ones which are verifiable in linear time. Furthermore,
the time and space needed for the certificate must remain negligible.  In this
work, we seek such an optimal certificate for the problem of computing shifted
minimal approximant bases.

Here, an instance is given by the input $(\orders,\sys,\shifts)$ which is of
size $\bigO{\nbun\vsdim}$: each column $j$ of $\sys$ contains at most
$\nbun\order_j$ elements of $\field$, and the order sums to
$\order_1+\cdots+\order_\nbeq = |\orders|=\vsdim$.  We neglect the size of the
shift $\shifts$, since one may always assume that it is nonnegative and such
that $\max(\shifts)<\nbun\vsdim$ (see \cite[App.\,A]{JeNeScVi16}).  Thus,
ideally one would like to have a certificate which can be verified in time
$\bigO{\nbun\vsdim}$.

In this paper, we provide a non-interactive certification protocol which uses
the input $(\orders,\sys,\shifts)$, the output $\appbas$, and a certificate
which is a constant matrix $\cmat \in \cmatSpace$.  We design a Monte-Carlo
verification algorithm with cost bound $\bigO{\nbun\vsdim + \nbun^{\omega-1}
(\nbun+\nbeq)}$; this is optimal as soon as $\vsdim$ is large compared to
$\nbun$ and $\nbeq$ (e.g. when $\vsdim > \nbun^2 + \nbun\nbeq$), which is most
often the case of interest.  We also show that the certificate $\cmat$ can be
computed in $\bigO{\nbun^{\expmatmul-1}\vsdim}$ operations in $\field$, which
is faster than known approximant basis algorithms.

\myparagraph{Degrees and size of approximant bases.}

For $\appbas\in\appbasSpace$, we denote the row degree of $\appbas$ as
$\rdeg{\appbas} =(r_1,\ldots,r_\nbun)$ where $r_i=\deg(\appbas_{i,*})$ is the
degree of the row $i$ of $\appbas$ for $1\leq i\leq \nbun$.  The column degree
$\cdeg{\appbas}$ is defined similarly.  More generally, we will consider row
degrees shifted by some additive column weights: for a shift
$\shifts=(\shift{1},\ldots,\shift{\nbun})\in \shiftSpace$ the $\shifts$-row
degree of $\appbas$ is $\rdeg[\shifts]{\appbas} = (r_1,\ldots,r_\nbun)$ where
$r_i =
\max(\deg(\appbas_{i,1})+\shift{1},\ldots,\deg(\appbas_{i,\nbun})+\shift{\nbun})$.

We use $\sumVec{\cdot}$ to denote the sum of integer tuples: for example
$\sumVec{\rdeg[\shifts]{\appbas}}$ is the sum of the $\shifts$-row degree of
$\appbas$ (note that this sum might contain negative terms).  The comparison of
integer tuples is entrywise: $\cdeg{\sys} < \orders$ means that the column $j$
of $\sys$ has degree less than $\order_j$, for $1\le j\le \nbeq$.  When adding
a constant to a tuple, say for example $\shifts-1$, this stands for
the tuple $(\shift{1}-1,\ldots,\shift{\nbun}-1)$.

In existing approximant basis algorithms, the output bases may take different
forms: essentially, they can be $\shifts$-minimal (also called
$\shifts$-reduced \cite{BarBul92}), $\shifts$-weak Popov \cite{MulSto03}, or
$\shifts$-Popov \cite{BeLaVi99}.  For formal definitions and for motivating the
use of shifts, we direct the reader to these references and to those above
about approximant basis algorithms; here the precise form of the basis will not
play an important role.  What is however at the core of the efficiency of our
algorithms is the impact of these forms on the degrees in the basis.

In what follows, by \emph{size} of a matrix we mean the number of field
elements used for its dense representation.  We define the quantity
\[
  \size{\appbas}=\nbun^2 + \sum_{1\le i,j \le \nbun} \max(0,\deg(p_{ij}))
\]
for a matrix $\appbas = [p_{ij}] \in \appbasSpace$.  In the next paragraph, we
discuss degree bounds on $\appbas$ when it is the output of any of the
approximant basis algorithms mentioned above; note that these bounds all imply
that $\appbas$ has size in $\bigO{\nbun \vsdim}$.  

There is no general degree bound for approximant bases: any unimodular matrix
is a basis of $\modAppCustom{\orders}{\matz} = \appSpace$.  Still, a basis
$\appbas$ of $\modApp$ always satisfies $\deg(\det(\appbas)) \le \vsdim$.  Now,
for an $\shifts$-minimal $\appbas$, we have $\sumVec{\rdeg{\appbas}} \in
\bigO{\vsdim}$ as soon as $\sumVec{\shifts-\min(\shifts)} \in \bigO{\vsdim}$
\cite[Thm.\,4.1]{BarBul92}, and it was shown in \cite{ZhoLab12} that $\appbas$
has size in $\bigO{\nbun\vsdim}$ if $\sumVec{\!\max(\shifts)-\shifts} \in
\bigO{\vsdim}$.  Yet, without such assumptions on the shift, there are
$\shifts$-minimal bases whose size is in $\Theta(\nbun^2 \vsdim)$
\cite[App.\,B]{JeNeScVi16}, ruling out the feasibility of finding them in time
$\softO{\nbun^{\expmatmul-1} \vsdim}$.  In this case, the fastest known
algorithms return the more constrained $\shifts$-Popov basis $\appbas$, for
which $\sumVec{\cdeg{\appbas}} \le \vsdim$ holds independently of $\shifts$.

\myparagraph{Problem and contribution.}

Certifying that a matrix $\appbas$ is an $\shifts$-minimal approximant basis
for a given instance $(\orders,\sys,\shifts)$ boils down to the following three
properties of $\appbas$:
\begin{enumerate}[(1)]
  \item \emph{Minimal:} $\appbas$ is in $\shifts$-reduced
    form.  By definition, this amounts to testing the invertibility of the
    so-called $\shifts$-leading matrix of $\appbas$ (see Step\,\textbf{1}
    of \cref{algo:certif_appbas} for the construction of this matrix), which
    can be done using $\bigO{\nbun^\expmatmul}$ operations in $\field$.
  \item \emph{Approximant:} the rows of $\appbas$ are approximants.  That is,
    we should check that $\appbas \sys = \matz \bmod \mods$.  The difficulty is
    to avoid computing the full truncated product $\appbas\sys \bmod \mods$,
    since this costs $\softO{\nbun^{\expmatmul-1}\vsdim}$.  In
    \cref{sec:verif_truncmatprod}, we give a probabilistic algorithm which
    verifies more generally $\appbas \sys = \rhs \bmod \mods$ using
    $\bigO{\size{\appbas}+\nbun\vsdim}$ operations, without requiring a
    certificate.
  \item \emph{Basis:} the rows of $\appbas$ generate the
    approximant module\footnote{This is not implied by $(1)$ and $(2)$: for
      $\order = \max(\orders)$, then $\var^\order \idMat[\nbun]$ is
      $\shifts$-reduced and $\var^\order \idMat[\nbun] \sys = \matz \bmod
      \mods$ holds; yet, $\var^\order \idMat[\nbun]$ is not a basis of
    $\modApp$ for most $(\sys,\orders)$.}.  For this, we prove that it suffices
    to verify first that $\det(\appbas)$ is of the form $c \var^\delta$ for
    some $c\in\field\setminus\{0\}$ and where $\delta=\sumVec{\rdeg{\appbas}}$,
    and second that some constant $\nbun \times (\nbun+\nbeq)$ matrix has full
    rank; this matrix involves $\appbas(0)$ and the coefficient $\cmat$ of
    degree~$0$ of $\appbas\sys\xDiag{-\orders}$.  In \cref{sec:certif_appbas},
    we show that $\cmat$ can serve as a certificate, and that a probabilistic
    algorithm can assess its correctness at a suitable cost.
\end{enumerate}

Our (non-interactive) certification protocol is as follows.  Given
$(\orders,\sys,\shifts)$, the \emph{Prover} computes a matrix $\appbas$,
supposedly an $\shifts$-minimal basis of $\modApp$, along with a constant
matrix $\cmat \in \cmatSpace$, supposedly the coefficient of degree~$0$ of the
product $\appbas\sys\xDiag{-\orders}$. Then, the \emph{Prover} communicates
these results to the \emph{Verifier} who must solve \cref{pbm:certif_appbas}
within a cost asymptotically better than $\softO{\nbun^{\expmatmul-1}\vsdim}$.

\begin{problembox}
  \problemInfo
  {Approximant basis certification}
  {pbm:certif_appbas}

  \dataInfos{Input}{
    \item order $\orders \in \orderSpace$,
    \item matrix $\sys \in \sysSpace$ with $\cdeg{\sys} < \orders$,
    \item shift $\shifts \in \shiftSpace$,
    \item matrix $\appbas \in \appbasSpace$,
    \item certificate matrix $\cmat \in \cmatSpace$.
  }

  \dataInfo{Output}{
    \algoword{True} if $\appbas$ is an $\shifts$-minimal
    basis of $\modApp$ and $\cmat$ is the coefficient of degree $0$ of
    $\appbas\sys\xDiag{-\orders}$, otherwise \algoword{False}.
  }
\end{problembox}

The main result in this paper is an efficient solution to
\cref{pbm:certif_appbas}.

\begin{theorem}
  \label{thm:certif_appbas}
  There is a Monte-Carlo algorithm which solves \cref{pbm:certif_appbas} using
  $\bigO{\nbun \vsdim + \nbun^{\expmatmul-1}(\nbun+\nbeq)}$ operations in
  $\field$, assuming $\size{\appbas} \in \bigO{\nbun\vsdim}$.  It chooses
  $\nbun+2$ elements uniformly and independently at random from a finite subset
  $\fsubset \subset \field$.  If $\fsubset$ has cardinality at least
  $2(\vsdim+1)$, then the probability that a \algoword{True} answer is
  incorrect is less than $1/2$, while a \algoword{False} answer is always
  correct.
\end{theorem}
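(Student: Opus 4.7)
The plan is to decompose \cref{pbm:certif_appbas} into four essentially independent subtasks—$\shifts$-minimality of $\appbas$, the approximant condition $\appbas\sys \equiv \matz \bmod \mods$, correctness of the certificate $\cmat$, and the basis condition—and to show that each is solvable within the claimed cost, relying on probabilistic arguments only where a deterministic route would be too expensive.

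First, to test $\shifts$-minimality, I would assemble the $\shifts$-leading matrix $\leadingMat[\shifts]{\appbas}$ from the appropriate coefficients of $\appbas$ and check its invertibility in $\bigO{\nbun^\expmatmul}$ operations. This preparatory step is also used to read off a scalar $c$ and an exponent $\delta$ (the leading coefficient and the degree that $\det(\appbas)$ must have, as dictated by the reduced form together with the row-degree data of $\appbas$), which will be reused in the third step.

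Second, I would combine the approximant condition and the correctness of $\cmat$ into the single equation
\[
  \appbas\sys \;\equiv\; \cmat\,\xDiag{\orders} \bmod \xDiag{\orders+1},
\]
and feed it to the probabilistic truncated-product verifier of \cref{sec:verif_truncmatprod}, run at truncation order $\orders+1$ instead of $\orders$. Indeed, this single equation holds exactly when $\appbas\sys \equiv \matz \bmod \mods$ \emph{and} the degree-$0$ coefficient of $\appbas\sys\xDiag{-\orders}$ equals $\cmat$. Under the hypothesis $\size{\appbas} \in \bigO{\nbun\vsdim}$, this step runs in $\bigO{\nbun\vsdim}$ operations and consumes $\nbun+1$ random elements of $\fsubset$.

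Third, accepting the basis criterion recalled in the introduction, the basis condition reduces to checking (a) $\det(\appbas)$ has the prescribed monomial shape $c\,\var^\delta$ and (b) a specific constant $\nbun\times(\nbun+\nbeq)$ matrix built from $\appbas(0)$ and $\cmat$ has full row rank. Property~(b) is handled deterministically by rank computation on a matrix of that size at cost $\bigO{\nbun^{\expmatmul-1}(\nbun+\nbeq)}$. Property~(a) is handled probabilistically: I draw one additional point $\evpt$ uniformly at random from $\fsubset$, evaluate $\appbas(\evpt) \in \matSpace[\nbun]$ entrywise by Horner's rule in $\bigO{\size{\appbas}} \subset \bigO{\nbun\vsdim}$ operations, compute $\det(\appbas(\evpt))$ in $\bigO{\nbun^\expmatmul}$, and accept only if this value equals $c\,\evpt^\delta$. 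Together with the second step, this accounts for the advertised total of $\nbun+2$ random elements.

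The main obstacle is the error analysis. A \algoword{False} answer arises only from a deterministic equality or rank test that genuinely fails, hence is always correct. An incorrect \algoword{True} answer can only come from one of the two probabilistic checks: either the truncated-product verifier of \cref{sec:verif_truncmatprod} mistakenly accepts—an event whose probability is controlled in that section in terms of $|\fsubset|$—or the nonzero polynomial $\det(\appbas) - c\,\var^\delta$, of degree at most $\delta \leq \vsdim$, vanishes at $\evpt$, which by Schwartz--Zippel occurs with probability at most $\vsdim/|\fsubset|$. Under the hypothesis $|\fsubset| \geq 2(\vsdim+1)$ and by a union bound, the overall false-accept probability stays strictly below $1/2$. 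Summing the costs of the four subtasks yields the announced $\bigO{\nbun\vsdim + \nbun^{\expmatmul-1}(\nbun+\nbeq)}$ bound, completing the proof.
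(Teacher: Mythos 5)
Your algorithm decomposition matches the paper's \cref{algo:certif_appbas} almost exactly: leading-matrix invertibility for $\shifts$-reducedness, a deterministic rank check on $[\appbas(0)\;\;\cmat]$, a one-point evaluation test for $\det(\appbas)$ being the monomial $c\,\var^\delta$ (with $\delta$ deduced from the shifted row degrees via Kailath), and a single call to the truncated-product verifier for the combined identity $\appbas\sys = \cmat\mods \bmod \xDiag{\orders+1}$, consuming $\nbun+1$ random elements there plus one more for the determinant test.

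However, the probability analysis has a genuine gap. You argue by a union bound that the false-accept probability is bounded by the sum of the two individual failure probabilities, roughly $\frac{\vsdim}{\#\fsubset} + \frac{\max(\orders)+1}{\#\fsubset}$. When $\max(\orders)$ is close to $\vsdim$ (e.g.\ $\nbeq=1$), this sum is close to $\frac{2\vsdim+1}{\#\fsubset}$, and the hypothesis $\#\fsubset \ge 2(\vsdim+1)$ only yields a bound close to $1$, not the claimed $<1/2$. The union bound is the wrong tool here: the bad event is not ``check~3 errs \emph{or} check~4 errs'' but rather ``check~3 returns True \emph{and} check~4 returns True'' when at least one of them is testing a genuinely failing condition. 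Since the algorithm returns \algoword{True} only if \emph{both} probabilistic checks accept, and an incorrect instance that survives the deterministic tests must violate at least one of the two probabilistically tested conditions, the false-accept probability is at most the \emph{maximum} of the two individual failure probabilities, namely $\max\bigl(\frac{\vsdim+1}{\#\fsubset},\,\frac{\max(\orders)+1}{\#\fsubset}\bigr) = \frac{\vsdim+1}{\#\fsubset}$, which is $<1/2$ under the stated hypothesis. You need this conjunctive structure, not a disjunctive union bound, to close the argument.
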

A detailed cost bound showing the constant factors is described in
\cref{prop:algo:certif_appbas}.  If $\size{\appbas} \in \bigO{\nbun\vsdim}$,
then the size of the input of \cref{pbm:certif_appbas} is in
$\bigO{\nbun\vsdim}$; the cost bound above is therefore optimal (up to constant
factors) as soon as $\nbun^{\expmatmul-2}(\nbun+\nbeq) \in \bigO{\vsdim}$.

If $\field$ is a small finite field, there may be no subset
$\fsubset\subset\field$ of cardinality $\#S \ge 2(\vsdim+1)$.  Then, our
approach still works by performing the probabilistic part of the computation
over a sufficiently large extension of $\field$.  Note that an extension of
degree about $1+\lceil\log_2(\vsdim)\rceil$ would be suitable; this would
increase our complexity estimates by a factor logarithmic in $\vsdim$, which
remains acceptable in our context.

Our second result is the efficient computation of the certificate.
\begin{theorem}
  \label{thm:certif_comp}
  Let $\orders\in\orderSpace$, let $\sys\in \sysSpace$ with
  $\cdeg{\sys}<\orders$ and $\nbun\in\bigO{\vsdim}$, and let $\appbas \in
  \appbasSpace$.  If $\sumVec{\rdeg{\appbas}} \in \bigO{\vsdim}$ or
  $\sumVec{\cdeg{\appbas}} \in \bigO{\vsdim}$, there is a deterministic
  algorithm which computes the coefficient of degree $0$ of $\appbas\sys
  \xDiag{-\orders}$ using $\bigO{\nbun^{\omega-1}\vsdim}$ operations
  in~$\field$ if $\nbun\geq\nbeq$ and $\bigO{\nbun^{\omega-1}\vsdim
  \log(\nbeq/\nbun)}$ operations in $\field$ if $\nbun<\nbeq$.
\end{theorem}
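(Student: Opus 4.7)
The plan is to exploit the observation that the constant coefficient of $\matcol{(\appbas\sys\xDiag{-\orders})}{j}$ is precisely the coefficient of $\var^{\order_j}$ of $\matcol{(\appbas\sys)}{j}$; the task therefore reduces to extracting, for each $j$, a single coefficient of the $j$-th column of the polynomial matrix product $\appbas\sys$, without forming the full product. The degree hypothesis on $\appbas$ (either $\sumVec{\rdeg{\appbas}}$ or $\sumVec{\cdeg{\appbas}}$ is in $\bigO{\vsdim}$) is what makes this feasible, via partial linearization.

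First I would handle the case $\nbun\ge\nbeq$, assuming $\sumVec{\rdeg{\appbas}}\in\bigO{\vsdim}$ (the column-sum case is symmetric, slicing columns of $\appbas$ instead of rows). Following the classical partial linearization of Storjohann and of Gupta--Jeannerod--Sarkar--Storjohann, I pick a chunk size $\bar{d}=\lceil\vsdim/\nbun\rceil$ and slice each row of $\appbas$ horizontally into $\lceil(\deg\matrow{\appbas}{i}+1)/\bar{d}\rceil$ pieces of degree less than $\bar{d}$. The row-degree hypothesis ensures that the stacked matrix $\tilde{\appbas}$ has only $\bigO{\nbun}$ rows. Apply the same linearization column-wise to $\sys$ with chunk size $\bar{d}$: column $j$ is split into $\lceil\order_j/\bar{d}\rceil$ slices, yielding $\tilde{\sys}$ with $\bigO{\nbun+\nbeq}=\bigO{\nbun}$ columns of degree less than $\bar{d}$. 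Then $\tilde{\appbas}\tilde{\sys}$ is a product of two $\bigO{\nbun}\times\bigO{\nbun}$ polynomial matrices of degree $\bigO{\vsdim/\nbun}$, computable in $\bigO{\nbun^\omega\cdot\vsdim/\nbun}=\bigO{\nbun^{\omega-1}\vsdim}$ field operations by fast polynomial matrix multiplication. Each required coefficient of $\appbas\sys$ decomposes as a sum of $\bigO{1}$ coefficients of appropriate entries of $\tilde{\appbas}\tilde{\sys}$, identified by the slicing scheme; the reconstruction of $\cmat$ costs $\bigO{\nbun^2}$ field additions.

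For the second case $\nbun<\nbeq$, I would use divide-and-conquer on the columns of $\sys$: partition the columns into two halves of $\lceil\nbeq/2\rceil$ and $\lfloor\nbeq/2\rfloor$ columns, with truncated order tuples summing to $\vsdim_L+\vsdim_R=\vsdim$, and recursively compute the constant coefficient matrices of $\appbas\sys_L\xDiag{-\orders_L}$ and $\appbas\sys_R\xDiag{-\orders_R}$. The column-wise concatenation of the two outputs is the desired $\cmat$. The recursion terminates after $\lceil\log_2(\nbeq/\nbun)\rceil$ levels, at which point each leaf subproblem satisfies $\nbeq'\le\nbun$ and the first-case analysis applies. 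Using a chunk size adapted to the local order-sum, the per-level total cost is $\bigO{\nbun^{\omega-1}\vsdim}$, and summing over the $\bigO{\log(\nbeq/\nbun)}$ recursion levels yields the claimed bound.

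The main obstacle is to control the auxiliary dimensions $\tilde{\nbun}$ and $\tilde{\nbeq}$ of the partial-linearized matrices inside each recursive subproblem: since $\appbas$ is shared across all subproblems and may have rows of degree much larger than the local order-sum $\vsdim'$, a naive re-linearization of $\appbas$ in a small subproblem would inflate $\tilde{\nbun}$ well beyond $\bigO{\nbun}$. I would address this by pre-truncating $\appbas$ to degree $\max\orders'$ within each subproblem indexed by a column subset $J$ with local order tuple $\orders'$, which preserves the desired coefficients because they depend only on coefficients of $\appbas$ of degree at most $\max_{j\in J}\order_j$. Combined with a local chunk size matched to $\vsdim'$, this keeps the per-subproblem cost proportional to its local order-sum, so that each level of the recursion incurs only $\bigO{\nbun^{\omega-1}\vsdim}$ work in total and the logarithmic overhead arises cleanly from the recursion depth.
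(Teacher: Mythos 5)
Your proposal takes a genuinely different route from the paper, and it does not quite reach the stated cost bound.

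The paper's \cref{algo:certif_comp} never performs a polynomial matrix product.  It works directly from the formula $\cmat = \sum_{k=1}^{\order} \appbas_k \syss_{\order-k}$ (with $\syss = \sys\xDiag{\order-\orders}$, $\order=\max(\orders)$), observing (\cref{lem:control_nrows_ncols}) that $\appbas_k$ has at most $\min(\nbun,\gamma\vsdim/k)$ nonzero rows and $\syss_{\order-k}$ at most $\min(\nbeq,\vsdim/k)$ nonzero columns.  At each $k$ one does a single product of \emph{constant} matrices whose dimensions shrink with $k$, and the cost is bounded by summing these over $k$ via \cref{lem:bound_on_sum_powers}.  No partial linearization, no divide-and-conquer, no polynomial arithmetic.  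The log factor in the $\nbun<\nbeq$ case arises purely from the harmonic sum $\sum_k 1/k$ over the range $\lceil\vsdim/\nbeq\rceil \le k < \lceil\gamma\vsdim/\nbun\rceil$.

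Two concrete gaps in your argument.

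First, the claim that $\tilde{\appbas}\tilde{\sys}$ is ``computable in $\bigO{\nbun^\omega\cdot\vsdim/\nbun}$ field operations by fast polynomial matrix multiplication'' is not correct.  Evaluation--interpolation based polynomial matrix multiplication of two $\bigO{\nbun}\times\bigO{\nbun}$ matrices of degree $\bar{d}=\Theta(\vsdim/\nbun)$ costs $\bigO{\nbun^\omega\bar{d} + \nbun^2\,\polmultime{\bar{d}}}$, i.e.\ $\bigO{\nbun^{\omega-1}\vsdim + \nbun\vsdim\log(\vsdim/\nbun)\log\log(\vsdim/\nbun)}$, and schoolbook multiplication costs $\bigO{\bar{d}^2\nbun^\omega} = \bigO{\nbun^{\omega-2}\vsdim^2}$, neither of which is $\bigO{\nbun^{\omega-1}\vsdim}$ in general.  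The resulting bound would therefore be $\softO{\nbun^{\omega-1}\vsdim}$, not $\bigO{\nbun^{\omega-1}\vsdim}$.  Since the theorem's point (stressed at the start of \cref{sec:compute_certificate}) is precisely that the certificate can be computed with \emph{no} logarithmic overhead, hence strictly faster than the $\softO{\nbun^{\omega-1}\vsdim}$ approximant-basis computation, this gap is fatal and not a cosmetic imprecision.  (Relatedly, the assertion that each entry of $\cmat$ is recovered from $\bigO{1}$ coefficients of $\tilde{\appbas}\tilde{\sys}$ is also inaccurate: for fixed $(i,j)$ the number of slice pairs $(l,m)$ with $l+m$ in the admissible range is $\Theta(\min(r_i,\order_j)/\bar{d})$; the aggregate reconstruction cost is still $\bigO{\nbun^2}$, but not for the stated reason.)

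Second, the divide-and-conquer for $\nbun<\nbeq$ does not control the linearized row dimension as claimed.  Pre-truncating $\appbas$ to degree $\max\orders'$ in a subproblem with local order sum $\vsdim'$ only gives $\sumVec{\rdeg{\appbas'}}\le\nbun\max\orders' \le \nbun\vsdim'$, which is \emph{not} $\bigO{\vsdim'}$.  With local chunk size $\bar{d}'=\lceil\vsdim'/\nbun\rceil$, the number of row slices $\nbun + \sumVec{\rdeg{\appbas'}}/\bar{d}'$ can reach $\Theta(\nbun^2)$ (e.g.\ all of $\orders'$ concentrated in one column), so the subproblem cost is not proportional to $\vsdim'$ and the ``per-level cost $\bigO{\nbun^{\omega-1}\vsdim}$'' claim does not follow.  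The paper sidesteps this entirely by letting the per-$k$ constant matrices shrink naturally and paying the $\log(\nbeq/\nbun)$ only in the harmonic-sum range.
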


Note that the assumption $\nbun \in \bigO{\vsdim}$ in this theorem is commonly
made in approximant basis algorithms, since when $\vsdim \le \nbun$ most
entries of a minimal approximant basis have degree in $\bigO{1}$ and the
algorithms then rely on methods from dense $\field$-linear algebra.

\section{Certifying approximant bases}
\label{sec:certif_appbas}

Here, we present our certification algorithm. Its properties, given in
\cref{prop:algo:certif_appbas}, prove \cref{thm:certif_appbas}.  One of its
core components is the verification of truncated polynomial matrix products;
the details of this are in \cref{sec:verif_truncmatprod} and are taken for
granted here.

First, we show the basic properties behind the correctness of this algorithm,
which are summarized in the following result.

\begin{theorem}
  \label{thm:characterization_appbas}
  Let $\orders \in \orderSpace$, let $\sys \in \sysSpace$, and let
  $\shifts\in\shiftSpace$.  A matrix $\appbas \in \appbasSpace$ is an
  $\shifts$-minimal basis of $\modApp$ if and only if the following properties
  are all satisfied:
  \begin{enumerate}[(i)]
    \item \label{it:reduced} $\appbas$ is $\shifts$-reduced;
    \item \label{it:determinant} $\det(\appbas)$ is a nonzero monomial in
      $\polRing$;
    \item \label{it:approx} the rows of $\appbas$ are in $\modApp$, that is,
      $\appbas \sys = \matz \bmod \mods$;
    \item \label{it:kerbas} $[\appbas(0) \;\; \cmat] \in
      \matSpace[\nbun][(\nbun+\nbeq)]$ has full rank, where $\cmat$ is the
      coefficient of degree $0$ of $\appbas\sys\xDiag{-\orders}$.
  \end{enumerate}
\end{theorem}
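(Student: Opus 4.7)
My plan is to prove the two implications separately. First, the setup: let $\mathcal{M}\subseteq\appSpace$ denote the $\polRing$-submodule generated by the rows of $\appbas$. I will use throughout that $\modApp$ is the kernel of the $\polRing$-linear map $\appSpace\to\polRing^{1\times\nbeq}/\polRing^{1\times\nbeq}\mods$, $\app\mapsto\app\sys\bmod\mods$, whose target is a $\vsdim$-dimensional $\field$-space; in particular $\modApp$ is free of rank $\nbun$ over $\polRing$ and contains the rows of $\var^{\max(\orders)}\idMat[\nbun]$.

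For the forward direction, if $\appbas$ is an $\shifts$-minimal basis of $\modApp$ then (i) and (iii) are immediate. For (ii), the rows of $\var^{\max(\orders)}\idMat[\nbun]$ lie in $\modApp=\mathcal{M}$, hence $\var^{\max(\orders)}\idMat[\nbun]=\mat{U}\appbas$ for some $\mat{U}\in\polMatSpace[\nbun]$; taking determinants, $\det(\appbas)$ divides $\var^{\nbun\max(\orders)}$, so it is a nonzero monomial. Property (iv) will come out of the bijection below applied to $\mathcal{M}=\modApp$.

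For the reverse direction, assume (i)--(iv). By (ii), $\appbas$ is nonsingular; combined with (iii), $\mathcal{M}\subseteq\modApp$ is a rank-$\nbun$ submodule. Cramer's rule (writing $\app=\app\cdot\mathrm{adj}(\appbas)\cdot\appbas/\det(\appbas)$) gives $\det(\appbas)\cdot\modApp\subseteq\mathcal{M}$, and since $\det(\appbas)$ is a power of $\var$ up to unit by (ii), the quotient $\modApp/\mathcal{M}$ is annihilated by a power of $\var$. Such a module vanishes iff its socle $\{\app\in\modApp:\var\app\in\mathcal{M}\}/\mathcal{M}$ vanishes. The central step is a $\field$-linear bijection between this socle and the left kernel of $[\appbas(0)\;\;\cmat]$. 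Given $\mat{q}\in\polRing^{1\times\nbun}$ with $\mat{q}_0:=\mat{q}(0)$, the vector $\mat{q}\appbas/\var$ lies in $\modApp$ iff (a) $\mat{q}_0\appbas(0)=0$ (so it is polynomial) and (b) $\mat{q}_0\cmat=0$; here (b) follows from the identity $\mat{q}\appbas\sys=(\mat{q}\mat{R})\mods$ with $\mat{R}:=\appbas\sys\xDiag{-\orders}$ polynomial and $\mat{R}(0)=\cmat$ by (iii). Together (a) and (b) read $\mat{q}_0[\appbas(0)\;\;\cmat]=0$. Finally $\mat{q}\appbas/\var\in\mathcal{M}$ iff $\mat{q}/\var$ is polynomial iff $\mat{q}_0=0$; this sets up the announced bijection (well-definedness is immediate; surjectivity comes from reading $\mat{q}_0$ off any $\app\in\modApp$ with $\var\app=\mat{q}\appbas\in\mathcal{M}$). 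Hence (iv) is equivalent to $\modApp=\mathcal{M}$, closing both directions.

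The main technical subtlety is the bookkeeping in the bijection, specifically distinguishing ``$\mat{q}\appbas/\var$ is polynomial'' from ``it is an approximant'': the former gives the $\appbas(0)$ block and the latter the $\cmat$ block of the tall matrix. Everything else reduces to the divisibility relation $\det(\appbas)\mid\var^{\nbun\max(\orders)}$ together with the standard fact that a nonzero module annihilated by a power of $\var$ has a nonzero element annihilated by $\var$ itself.
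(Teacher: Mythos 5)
Your proof is correct, and for the crux of the result — that under (ii) and (iii) the rows of $\appbas$ generate $\modApp$ if and only if $[\appbas(0)\;\;\cmat]$ has full rank — it follows a genuinely different route from the paper's \cref{lem:kerbas_unimodular_colbas,lem:appbas_kerbas,lem:certif_via_rank}. The paper characterizes a basis of $\modApp$ as the left block of a kernel basis of $\trsp{[\trsp{\sys}\;\;-\xDiag{\orders}]}$, extracts from the unimodularity of column bases of kernel bases a Bezout-type identity $\appbas\mat{V}+\mat{Q}\mat{W}=\idMat[\nbun]$ which it evaluates at $0$ for one implication, and for the converse factors $\appbas=\mat{U}\mat{A}$ through a true basis $\mat{A}$ and notes that $\mat{U}(0)$ must then be singular. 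You instead argue module-theoretically: the adjugate identity shows the quotient $\modApp/\mathcal{M}$ (with $\mathcal{M}$ the row space of $\appbas$) is annihilated by $\det(\appbas)$, hence by a power of $\var$, so it vanishes iff its $\var$-socle vanishes, and you exhibit an explicit $\field$-linear isomorphism between that socle and the left kernel of $[\appbas(0)\;\;\cmat]$ — the $\appbas(0)$ block detecting polynomiality of $\row{q}\appbas/\var$, the $\cmat$ block detecting the approximant condition. This handles both implications at once and is self-contained (no appeal to the column-basis and kernel-basis results of Zhou--Labahn or to \cite{Neiger16b}); the paper's version, on the other hand, makes explicit the connection to kernel bases, which is thematically natural since those objects underlie the fast algorithms being certified.
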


We remark that having both $\appbas\sys = \matz \bmod \mods$ and $\cmat$ the
constant coefficient of $\appbas\sys\xDiag{-\orders}$ is equivalent to the
single truncated identity $\appbas \sys = \cmat \mods \bmod
\xDiag{\truncOrds}$, where $\truncOrds = (\order_1+1,\ldots,\order_\nbeq+1)$.

As mentioned above, the details of the certification of the latter identity is
deferred to \cref{sec:verif_truncmatprod}, where we present more generally the
certification for truncated products of the form $\appbas \sys = \rhs \bmod
\xDiag{\truncOrds}$.

Concerning \cref{it:determinant}, the fact that the determinant of any basis of
$\modApp$ must divide $\var^\vsdim$, where $\vsdim=\sumVec{\orders}$, is
well-known; we refer to \cite[Sec.\,2]{BecLab97} for a more general result.

The combination of \cref{it:reduced,it:approx} describes the set of matrices
$\appbas \in \appbasSpace$ which are $\shifts$-reduced and whose rows are in
$\modApp$.  For $\appbas$ to be an $\shifts$-minimal basis of $\modApp$, its
rows should further form a generating set for $\modApp$; thus, our goal here is
to prove that this property is realized by the combination of
\cref{it:determinant,it:kerbas}.

For this, we will rely on a link between approximant bases and kernel bases,
given in \cref{lem:appbas_kerbas}.  We recall that, for a given matrix $\mat{M}
\in \polMatSpace[\mu][\nu]$ of rank $r$,
\begin{itemize}
  \item a \emph{kernel basis} for $\mat{M}$ is a matrix in
    $\polMatSpace[(\mu-r)][\mu]$ whose rows form a basis of the left kernel
    $\{\app \in \polMatSpace[1][\mu] \mid \app\mat{M} = \matz\}$,
  \item a \emph{column basis} for $\mat{M}$ is a matrix in
    $\polMatSpace[\mu][r]$ whose columns form a basis of the column space
    $\{\mat{M}\app, \app\in\polMatSpace[\nu][1]\}$.
\end{itemize}

In particular, by definition, a kernel basis has full row rank and a column
basis has full column rank.  The next result states that the column space of a
kernel basis is the whole space (that is, the space spanned by the identity
matrix).

\begin{lemma}
  \label{lem:kerbas_unimodular_colbas} 
  Let $\mat{M} \in \polMatSpace[\mu][\nu]$ and let $\mat{B} \in
  \polMatSpace[k][\mu]$ be a kernel basis for $\mat{M}$.  Then, any column
  basis for $\mat{B}$ is unimodular.  Equivalently, $\mat{B} \mat{U} =
  \idMat[k]$ for some $\mat{U} \in \polMatSpace[\mu][k]$.
\end{lemma}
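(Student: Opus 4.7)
The plan is to factor $\mat{B}$ through an arbitrary column basis $\mat{C}$ and then use the fact that the rows of $\mat{B}$ exhaust the entire left kernel of $\mat{M}$ to force $\mat{C}$ to be unimodular. As a preliminary, since $\mat{B}$ is a kernel basis it has full row rank $k$, hence also column rank $k$; thus any column basis $\mat{C}$ of $\mat{B}$ is square of size $k \times k$, and by the defining property of a column basis, the columns of $\mat{B}$ are $\polRing$-combinations of the columns of $\mat{C}$, so we can write $\mat{B} = \mat{C}\mat{B}'$ for some $\mat{B}' \in \polMatSpace[k][\mu]$.

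Next I would exploit $\mat{B}\mat{M} = \matz$: this rewrites as $\mat{C}(\mat{B}'\mat{M}) = \matz$, and since $\mat{C}$ has full column rank (so is invertible over $\field(\var)$) we may cancel it to obtain $\mat{B}'\mat{M} = \matz$. The rows of $\mat{B}'$ therefore lie in the left kernel of $\mat{M}$, and since $\mat{B}$ is a basis of that kernel we have $\mat{B}' = \mat{X}\mat{B}$ for some $\mat{X} \in \polMatSpace[k][k]$. Substituting back yields $\mat{B} = \mat{C}\mat{X}\mat{B}$, and cancelling $\mat{B}$ on the right (allowed because it has full row rank) gives $\idMat[k] = \mat{C}\mat{X}$. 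A square polynomial matrix with a right inverse is unimodular, since $\det(\mat{C})\det(\mat{X}) = 1$ forces $\det(\mat{C})$ to be a nonzero constant; this settles the claim for $\mat{C}$.

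The equivalent reformulation then follows immediately: by definition of column basis, each column of $\mat{C}$ lies in the column space of $\mat{B}$, so $\mat{C} = \mat{B}\mat{U}_0$ for some $\mat{U}_0 \in \polMatSpace[\mu][k]$, and setting $\mat{U} = \mat{U}_0 \mat{X}$ gives $\mat{B}\mat{U} = \mat{C}\mat{X} = \idMat[k]$. The only conceptually delicate step is the passage from $\mat{B}\mat{M} = \matz$ to $\mat{B}'\mat{M} = \matz$, which genuinely requires the full-column-rank property of the column basis; once that is granted, the remainder is the standard cancellation of full-rank polynomial matrices through their invertibility over $\field(\var)$, and no degree estimates or auxiliary constructions are needed.
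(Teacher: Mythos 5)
Your proof is correct and follows essentially the same route as the paper's: factor $\mat{B}$ through the column basis, cancel the (nonsingular) column basis to see the factor's rows lie in the left kernel, express that factor as a left multiple of $\mat{B}$, and cancel $\mat{B}$ (full row rank) to exhibit a right inverse for the column basis. You add a couple of small explicit justifications the paper leaves implicit (that the column basis is $k\times k$, and the determinant argument for unimodularity), but the structure and key cancellations are identical.
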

\begin{proof}
  Let $\mat{S} \in \polMatSpace[k]$ be a column basis for $\mat{B}$.  By
  definition, $\mat{B} = \mat{S} \matt{B}$ for some $\matt{B} \in
  \polMatSpace[k][\mu]$.  Then $\matz = \mat{B}\mat{M} =
  \mat{S}\matt{B}\mat{M}$, hence $\matt{B}\mat{M} = \matz$ since $\mat{S}$ is
  nonsingular.  Thus, $\mat{B}$ being a kernel basis for $\mat{M}$, we have
  $\matt{B} = \mat{T} \mat{B}$ for some $\mat{T} \in \polMatSpace[k][k]$.  We
  obtain $(\mat{S}\mat{T} - \idMat[k]) \mat{B} = \matz$, hence $\mat{S}\mat{T}
  = \idMat[k]$ since $\mat{B}$ has full row rank.  Thus, $\mat{S}$ is
  unimodular.
\end{proof}

This arises for example in the computation of column bases and unimodular
completions in \cite{ZhoLab13,ZhoLab14}; the previous lemma can also be derived
from these references, and in particular from \cite[Lem.\,3.1]{ZhoLab13}.

Here, we will use the property of \cref{lem:kerbas_unimodular_colbas} for a
specific kernel basis, built from an approximant basis as follows.

\begin{lemma}
  \label{lem:appbas_kerbas}
  Let $\orders \in \orderSpace$, $\sys \in \sysSpace$, and $\appbas \in
  \appbasSpace$.  Then, $\appbas$ is a basis of $\modApp$ if and only if there
  exists $\mat{Q} \in \sysSpace$ such that $[\appbas \;\; \mat{Q}]$ is a kernel
  basis for $\trsp{[\trsp{\sys} \;\; -\xDiag{\orders}]}$.  If this is the case,
  then we have $\mat{Q} = \appbas \sys \xDiag{-\orders}$ and there exist
  $\mat{V} \in \polMatSpace[\nbun]$ and $\mat{W} \in
  \polMatSpace[\nbeq][\nbun]$ such that $\appbas \mat{V} + \mat{Q} \mat{W} =
  \idMat[\nbun]$.
\end{lemma}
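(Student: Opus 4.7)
The plan is to reinterpret the left kernel of the stacked matrix $\trsp{[\trsp{\sys} \;\; -\xDiag{\orders}]}$ directly in terms of approximants, use this dictionary to establish the equivalence in both directions, and then derive the \emph{moreover} statements by invoking \cref{lem:kerbas_unimodular_colbas}.

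First, I would observe that a row $[\app \;\; \col{q}] \in \polMatSpace[1][(\nbun+\nbeq)]$ lies in that left kernel if and only if $\app\sys = \col{q}\xDiag{\orders}$, which in turn is equivalent to $\app \in \modApp$ together with the identity $\col{q} = \app\sys\xDiag{-\orders}$ (well-defined as a polynomial matrix, since $\xDiag{\orders}$ divides $\app\sys$ on the right). In particular, the left kernel has rank $\nbun$, matching that of $\modApp$.

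For the forward implication, assuming $\appbas$ is a basis of $\modApp$, I set $\mat{Q} = \appbas\sys\xDiag{-\orders}$. The rows of $[\appbas \;\; \mat{Q}]$ lie in the kernel by the first step. Any other kernel row $[\app \;\; \col{q}]$ decomposes as $\app = \row{v}\appbas$ since $\appbas$ is a basis of $\modApp$, and then $\col{q} = \row{v}\mat{Q}$ follows from the kernel identity; linear independence of the rows of $[\appbas\;\;\mat{Q}]$ is immediate from the nonsingularity of $\appbas$. Conversely, if $[\appbas\;\;\mat{Q}]$ is a kernel basis, then the rows of $\appbas$ lie in $\modApp$, and any $\app \in \modApp$ lifts to the kernel row $[\app \;\; \app\sys\xDiag{-\orders}]$, which decomposes over the kernel basis; projecting onto the first $\nbun$ columns yields $\app = \row{v}\appbas$. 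Thus the rows of $\appbas$ generate $\modApp$, and since $\modApp$ is free of rank $\nbun$ and $\appbas$ is $\nbun \times \nbun$, nonsingularity of $\appbas$ follows automatically.

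For the \emph{moreover} part, the equality $\mat{Q} = \appbas\sys\xDiag{-\orders}$ is just a rearrangement of the kernel identity $\appbas\sys = \mat{Q}\xDiag{\orders}$. To produce $\mat{V}$ and $\mat{W}$, I apply \cref{lem:kerbas_unimodular_colbas} to the kernel basis $[\appbas\;\;\mat{Q}]$, which gives some $\mat{U} \in \polMatSpace[(\nbun+\nbeq)][\nbun]$ with $[\appbas\;\;\mat{Q}]\mat{U} = \idMat[\nbun]$; partitioning $\mat{U}$ horizontally into blocks of sizes $\nbun\times\nbun$ and $\nbeq\times\nbun$ yields the desired $\mat{V}$ and $\mat{W}$ with $\appbas\mat{V} + \mat{Q}\mat{W} = \idMat[\nbun]$. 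No real obstacle arises in this proof; the only step requiring attention is tracking dimensions and justifying that the square matrix $\appbas$ inherits nonsingularity from generation of the rank-$\nbun$ module $\modApp$.
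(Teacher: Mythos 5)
Your proof is correct and follows the same route as the paper's. The paper dispatches the equivalence by citing a reference ([Neiger16b, Lem.\,8.2]) and then derives the formula $\mat{Q} = \appbas\sys\xDiag{-\orders}$ and the Bezout-type identity from \cref{lem:kerbas_unimodular_colbas}, exactly as you do; your contribution is simply to spell out the dictionary between left-kernel rows $[\app \;\; \col{q}]$ and approximants $\app$, and then to use the generate-a-rank-$\nbun$-free-module-with-$\nbun$-elements argument to get nonsingularity of $\appbas$ in the converse direction, all of which is sound.
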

\begin{proof}
  The equivalence is straightforward; a detailed proof can be found in
  \cite[Lem.\,8.2]{Neiger16b}.  If $[\appbas \;\; \mat{Q}]$ is a kernel basis
  for $\trsp{[\trsp{\sys} \;\; -\xDiag{\orders}]}$, then we have $\appbas \sys
  = \mat{Q} \mods$, hence the explicit formula for $\mat{Q}$.  Besides, the
  last claim is a direct consequence of \cref{lem:kerbas_unimodular_colbas}.
\end{proof}

This leads us to the following result, which forms the main ingredient
that was missing in order to prove \cref{thm:characterization_appbas}.

\begin{lemma}
  \label{lem:certif_via_rank}
  Let $\orders \in \orderSpace$ and let $\sys \in \sysSpace$.  Let $\appbas \in
  \appbasSpace$ be such that $\appbas \sys = \matz \bmod \mods$ and $\det(\appbas)$
  is a nonzero monomial, and let $\cmat \in \matSpace[\nbun][(\nbun+\nbeq)]$
  be the constant coefficient of $\appbas \sys \xDiag{-\orders}$.  Then,
  $\appbas$ is a basis of $\modApp$ if and only if $[\appbas(0) \;\; \cmat]
  \in \matSpace[\nbun][(\nbun+\nbeq)]$ has full rank.
\end{lemma}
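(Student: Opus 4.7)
The plan is to translate the condition ``$\appbas$ is a basis of $\modApp$'' via \cref{lem:appbas_kerbas} into the statement that $[\appbas \;\; \mat{Q}]$ is a kernel basis of $\trsp{[\trsp{\sys} \;\; -\xDiag{\orders}]}$, where $\mat{Q} = \appbas \sys \xDiag{-\orders}$. The hypothesis $\appbas \sys = \matz \bmod \mods$ ensures that $\mat{Q}$ is a polynomial matrix, and its constant coefficient is precisely $\cmat$, so everything is phrased in terms of the matrices available to the verifier.

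For the ``only if'' direction, I would invoke the last assertion of \cref{lem:appbas_kerbas} to obtain polynomial matrices $\mat{V}$ and $\mat{W}$ such that $\appbas \mat{V} + \mat{Q} \mat{W} = \idMat[\nbun]$. Evaluating this identity at $\var = 0$ yields $\appbas(0)\mat{V}(0) + \cmat\mat{W}(0) = \idMat[\nbun]$, which exhibits an explicit right inverse of $[\appbas(0) \;\; \cmat]$ and therefore establishes that this matrix has full row rank $\nbun$.

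For the ``if'' direction, the rows of $[\appbas \;\; \mat{Q}]$ belong to the left kernel of $\trsp{[\trsp{\sys} \;\; -\xDiag{\orders}]}$ and are $\polRing$-linearly independent since $\det(\appbas) \neq 0$. Letting $\mat{K}\in\polMatSpace[\nbun][(\nbun+\nbeq)]$ denote any kernel basis, this gives $[\appbas \;\; \mat{Q}] = \mat{T} \mat{K}$ for some nonsingular $\mat{T} \in \polMatSpace[\nbun]$, and the whole task reduces to proving that $\mat{T}$ is unimodular. I would do this by showing $\det(\mat{T})$ has no root in the algebraic closure $\overline{\field}$. For any $\evpt \ne 0$, the assumption that $\det(\appbas)$ is a nonzero monomial in $\var$ implies $\appbas(\evpt)$ is invertible, so writing $\mat{K}_1$ for the leftmost $\nbun$ columns of $\mat{K}$, the identity $\mat{T}(\evpt)\mat{K}_1(\evpt) = \appbas(\evpt)$ forces $\mat{T}(\evpt)$ to be nonsingular. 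For $\evpt = 0$, I would apply \cref{lem:kerbas_unimodular_colbas} to the kernel basis $\mat{K}$: it yields some $\mat{U} \in \polMatSpace[(\nbun+\nbeq)][\nbun]$ with $\mat{K}\mat{U} = \idMat[\nbun]$, so evaluating at $0$ shows $\mat{K}(0)$ has full row rank. Combined with the hypothesis that $[\appbas(0) \;\; \cmat] = \mat{T}(0)\mat{K}(0)$ has row rank $\nbun$, this forces $\mat{T}(0)$ to be nonsingular as well, concluding the argument.

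The main obstacle is the unimodularity proof for $\mat{T}$: the monomial-determinant hypothesis on $\appbas$ cleanly dispatches all evaluations $\evpt \ne 0$, and the subtle point is recognizing that the rank condition on the constant matrix $[\appbas(0) \;\; \cmat]$ is exactly what is needed to handle $\evpt = 0$, through the factorization $[\appbas(0) \;\; \cmat] = \mat{T}(0)\mat{K}(0)$ and the full-row-rank property of $\mat{K}(0)$ supplied by \cref{lem:kerbas_unimodular_colbas}.
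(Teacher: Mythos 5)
Your proof is correct, but the ``if'' direction is structured genuinely differently from the paper's.  The paper argues by contrapositive: assuming $\appbas$ is \emph{not} a basis of $\modApp$, it factors $\appbas = \mat{U}\mat{A}$ through an actual basis $\mat{A}$ of $\modApp$ and uses the monomial-determinant hypothesis to show that $\det(\mat{U})(0) = 0$, whence $[\appbas(0) \;\; \cmat] = \mat{U}(0)[\mat{A}(0) \;\; \anyMat]$ is rank-deficient.  You instead argue directly: you factor $[\appbas \;\; \mat{Q}] = \mat{T}\mat{K}$ through a kernel basis $\mat{K}$ and prove $\mat{T}$ unimodular by scanning evaluations over $\overline{\field}$ --- the monomial hypothesis dispatches $\evpt\neq 0$, and the rank hypothesis on $[\appbas(0)\;\;\cmat]$ handles $\evpt=0$.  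Both factorizations are cousins (a basis of $\modApp$ and a kernel basis of $\trsp{[\trsp{\sys}\;\;-\xDiag{\orders}]}$ are related precisely by \cref{lem:appbas_kerbas}), and both use the monomial hypothesis and the rank hypothesis, but applied to the transformation matrix in dual ways.  The paper's version is a little shorter because it does not need to verify unimodularity evaluation-by-evaluation; your version has the minor benefit that the nonsingularity of $\appbas$ is available from the start, so the paper's extra ``if $\appbas$ has rank $<\nbun$'' sub-case (which is in fact vacuous under the stated hypotheses) never arises.  One small redundancy in your write-up: the full row rank of $\mat{K}(0)$, obtained from \cref{lem:kerbas_unimodular_colbas}, is not actually needed to conclude $\mat{T}(0)$ nonsingular; the inequality $\nbun = \rank{\mat{T}(0)\mat{K}(0)} \le \rank{\mat{T}(0)}$ suffices on its own.
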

\begin{proof}
  First, assume that $\appbas$ is a basis of $\modApp$.  Then, defining
  $\mat{Q} = \appbas \sys \xDiag{-\orders} \in \sysSpace$,
  \cref{lem:appbas_kerbas} implies that $\appbas \mat{V} + \mat{Q} \mat{W} =
  \idMat[\nbun]$ for some $\mat{V} \in \polMatSpace[\nbun]$ and $\mat{W} \in
  \polMatSpace[\nbeq][\nbun]$.  Since $\mat{Q}(0) = \cmat$, this yields
  $\appbas(0) \mat{V}(0) + \cmat \mat{W}(0) = \idMat[\nbun]$, and thus
  $[\appbas(0) \;\; \cmat]$ has full rank.

  Now, assume that $\appbas$ is not a basis of $\modApp$.  If $\appbas$ has
  rank $<\nbun$, then $[\appbas(0) \;\; \cmat]$ has rank $<\nbun$ as well.
  If $\appbas$ is nonsingular, $\appbas = \mat{U} \mat{A}$ for some basis
  $\mat{A}$ of $\modApp$ and some $\mat{U}\in\appbasSpace$ which is nonsingular
  but not unimodular.  Then, $\det(\mat{U})$ is a nonconstant divisor of the
  nonzero monomial $\det(\appbas)$; hence $\det(\mat{U})(0) = 0 =
  \det(\mat{U}(0))$, and thus $\mat{U}(0)$ has rank $<\nbun$.  Since $[\appbas
  \;\; \mat{Q}] = \mat{U} [\mat{A} \;\; \mat{A}\sys\xDiag{-\orders}]$, it
  directly follows that $[\appbas(0) \;\; \cmat]$ has rank $<\nbun$.
\end{proof}

\begin{proof}[Proof of \cref{thm:characterization_appbas}]
  If $\appbas$ is an $\shifts$-minimal basis of $\modApp$, then by definition
  \cref{it:reduced,it:approx} are satisfied.  Since the rows of
  $\var^{\max(\orders)}\idMat[\nbun]$ are in $\modApp$ and $\appbas$ is a
  basis, the matrix $\var^{\max(\orders)}\idMat[\nbun]$ is a left multiple of
  $\appbas$ and therefore the determinant of $\appbas$ divides
  $\var^{\nbun\max(\orders)}$: it is a nonzero monomial.  Then, according to
  \cref{lem:certif_via_rank}, $[\appbas(0) \;\; \cmat]$ has full rank.
  Conversely, if \cref{it:determinant,it:approx,it:kerbas} are satisfied, then
  \cref{lem:certif_via_rank} states that $\appbas$ is a basis of $\modApp$;
  thus if furthermore \cref{it:reduced} is satisfied then $\appbas$ is an
  $\shifts$-minimal basis of $\modApp$.
\end{proof}

\begin{algobox}
  \algoInfo
  {CertifApproxBasis}
  {algo:certif_appbas}

  \dataInfos{Input}{
    \item order $\orders = (\order_1,\ldots,\order_\nbeq) \in \orderSpace$,
    \item matrix $\sys \in \sysSpace$ with $\cdeg{\sys} < \orders$,
    \item shift $\shifts = (\shift{1},\ldots,\shift{\nbun}) \in \shiftSpace$,
    \item matrix $\appbas \in \appbasSpace$,
    \item certificate matrix $\cmat \in \cmatSpace$.
  }

  \dataInfo{Output}{
    \algoword{True} if $\appbas$ is an $\shifts$-minimal basis of $\modApp$ and
    $\cmat$ is the constant term of $\appbas\sys\xDiag{-\orders}$, otherwise
    \algoword{True} or \algoword{False}.
  }

  \algoSteps{
    \item \inlcomment{$\appbas$ not in $\shifts$-reduced form $\Rightarrow$ False} \\
      $\mat{L} \assign$ the matrix in $\matSpace[\nbun]$ whose entry $i,j$ is
      the coefficient of degree $\rdeg[\shifts]{\matrow{\appbas}{i}} -
      \shift{j}$ of the entry $i,j$ of $\appbas$ \\
      \algoword{If} $\mat{L}$ is not invertible \algoword{then} \algoword{return} \algoword{False}
    \item \inlcomment{$\rank{[\appbas(0) \;\; \cmat]}$ not full rank $\Rightarrow$ False} \\
      \algoword{If} $\rank{[\appbas(0) \;\; \cmat]} < \nbun$ \algoword{then} \algoword{return} \algoword{False}
    \item \inlcomment{$\det(\appbas)$ not a nonzero monomial $\Rightarrow$ False} \\
      $\fsubset \assign$ a finite subset of $\field$ \\
      $\Delta \assign \sumVec{\rdeg[\shifts]{\appbas}}-\sumVec{\shifts}$ \\
      $\evpt \assign$ chosen uniformly at random from $\fsubset$ \\
      \algoword{If} $\det(\appbas(\evpt)) \neq \det(\appbas(1)) \evpt^\Delta$ \algoword{then} \algoword{return} \algoword{False}
    \item \inlcomment{certify truncated product $\appbas\sys = \cmat\mods \bmod \xDiag{\truncOrds}$} \\
      $\truncOrds \assign (\order_1+1,\ldots,\order_\nbeq+1)$ \\
      \algoword{Return} $\algoname{VerifTruncMatProd}(\truncOrds,\appbas,\sys,\cmat\mods)$
  }
\end{algobox}

In order to provide a sharp estimate of the cost of \cref{algo:certif_appbas},
we recall the best known cost bound with constant factors of the LQUP
factorization of an $\nbun\times\nbeq$ matrix over $\field$, which we use for
computing ranks and determinants. Assuming $\nbun\le\nbeq$, we have:
\[
  \costLQUP{\nbun}{\nbeq} =
  \left(\left\lceil \frac{\nbeq}{\nbun} \right\rceil \frac{1}{2^{\expmatmul-1}-2} - \frac{1}{2^\expmatmul-2}\right) MM(\nbun)
\]
operations in $\field$ \cite[Lem.\,5.1]{DumGioPer08}, where $MM(\nbun)$ is the
cost for the multiplication of $\nbun\times\nbun$ matrices over $\field$.

\begin{proposition}
  \label{prop:algo:certif_appbas}
  \cref{algo:certif_appbas} uses at most
  \begin{align*}
    & 5\size{\appbas} + 2\nbun (\vsdim + \max(\orders)) + 3\costLQUP{\nbun}{\nbun}
    + \costLQUP{\nbun}{\nbun+\nbeq}  \\
    & \qquad\qquad\qquad\qquad\;\; + (4\nbun+1)\nbeq + 4\log_2(\vsdim\order_1\cdots\order_\nbeq)  \\
    & \in\bigO{\size{\appbas} + \nbun\vsdim + \nbun^{\expmatmul-1} (\nbun+\nbeq)}
  \end{align*}
  operations in $\field$, where $\vsdim=\sumVec{\orders}$.  It is a
  false-biased Monte Carlo algorithm.  If $\appbas$ is not an $\shifts$-minimal
  basis of $\modApp$, then the probability that it outputs \algoword{True} is
  less than $\frac{\vsdim+1}{\#\fsubset}$, where $\fsubset$ is the
  finite subset of $\field$ from which random field elements are drawn.
\end{proposition}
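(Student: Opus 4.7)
The plan is to split the proof into three ingredients: correctness of the algorithm's decision, the precise field-operation count, and the bound on the false-positive probability.

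\emph{Correctness.} I would lean on \cref{thm:characterization_appbas} and match each step of \cref{algo:certif_appbas} to one of its four conditions (i)--(iv). Step 1 tests (i) via the classical fact that $\appbas$ is $\shifts$-reduced iff its $\shifts$-leading matrix is invertible; Step 2 tests (iv) directly. Step 3 tests (ii) using that, once Step 1 has passed, $\appbas$ is $\shifts$-reduced, so $\deg(\det(\appbas)) = \Delta$, and hence $\det(\appbas)$ is a nonzero monomial iff the polynomial $P(\var) := \det(\appbas(\var)) - \det(\appbas(1))\var^\Delta$ (of degree at most $\Delta$) vanishes identically. Step 4 tests (iii) together with the correctness of $\cmat$, because the single truncated identity $\appbas \sys = \cmat \mods \bmod \xDiag{\truncOrds}$ is equivalent to $\appbas\sys = \matz \bmod \mods$ combined with $\cmat$ being the constant term of $\appbas\sys\xDiag{-\orders}$, as remarked just after \cref{thm:characterization_appbas}. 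On valid input, Steps 1 and 2 pass deterministically, Step 3 passes because $P \equiv 0$, and Step 4 passes because \algoname{VerifTruncMatProd} is one-sided, so the algorithm always returns \algoword{True}.

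\emph{Complexity.} I would add up per-step costs. Step 1 needs $\bigO{\size{\appbas}}$ operations to read the row degrees and build $\mat{L}$, plus one $\costLQUP{\nbun}{\nbun}$ to test invertibility. Step 2 extracts $\appbas(0)$ in $\bigO{\nbun^2}$ operations and performs one $\costLQUP{\nbun}{\nbun+\nbeq}$. Step 3 uses Horner to evaluate $\appbas$ at $\evpt$ and at $1$ in $\bigO{\size{\appbas}}$ operations each, two $\costLQUP{\nbun}{\nbun}$'s for the two determinants, and $\bigO{\log \Delta}$ multiplications to form $\evpt^\Delta$ by repeated squaring; the logarithmic term in the stated bound captures this with $\Delta \le \sumVec{\rdeg[\shifts]{\appbas}}$ controlled in terms of $\vsdim$ and the $\order_j$. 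Step 4 is the cost of \algoname{VerifTruncMatProd} as derived in Section 3, namely $\bigO{\size{\appbas} + \nbun(\vsdim + \max(\orders))}$. Adding these with explicit constants gives the exact bound, and the asymptotic form uses $\costLQUP{\nbun}{\nbun+\nbeq} \in \bigO{\nbun^{\omega-1}(\nbun+\nbeq)}$.

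\emph{Error probability.} Steps 1 and 2 are deterministic; Steps 3 and 4 use disjoint randomness (the evaluation point $\evpt$ and the $\nbun+1$ points consumed by \algoname{VerifTruncMatProd}). Assuming the input is invalid, \cref{thm:characterization_appbas} tells us at least one of (ii), (iii), or the correctness of $\cmat$ fails after Steps 1 and 2 have passed. If (ii) fails, then $P$ is a nonzero polynomial of degree at most $\Delta$, and Schwartz--Zippel bounds the probability that Step 3 passes by $\Delta/\#\fsubset$. If instead (ii) holds but (iii) or the correctness of $\cmat$ fails, the identity checked by Step 4 is violated and the bound from Section 3 gives probability at most $\vsdim/\#\fsubset$ that Step 4 passes. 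The delicate point I expect to be the main obstacle is the regime $\Delta > \vsdim$: the $\Delta/\#\fsubset$ Schwartz--Zippel estimate is then too weak for the target $(\vsdim+1)/\#\fsubset$, but in this regime $\det(\appbas)$ cannot divide $\var^\vsdim$, so $\appbas$ fails to be a basis of $\modApp$, which forces the truncated identity of Step 4 to be violated; the joint probability that all steps pass is then bounded by the Step 4 bound alone. Assembling the sub-cases yields a total false-positive probability of at most $(\vsdim+1)/\#\fsubset$, proving the Monte Carlo claim.
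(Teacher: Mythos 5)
Your correctness and complexity analyses follow the paper's approach and are fine. The gap is in the error-probability argument, specifically in your handling of the regime $\Delta > \vsdim$.

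You propose that when $\Delta > \vsdim$, the fact that $\appbas$ cannot be a basis ``forces the truncated identity of Step~4 to be violated,'' and that the Step~4 bound therefore takes over. This implication is false. The truncated identity $\appbas\sys = \cmat\mods \bmod \xDiag{\truncOrds}$ is equivalent to condition \cref{it:approx} of \cref{thm:characterization_appbas} together with correctness of $\cmat$, and neither of these needs $\appbas$ to generate $\modApp$. Concretely, take $\field = \ZZ/7\ZZ$, $\nbun = 4$, $\nbeq = 1$, $\orders = (2)$, $\shifts = \unishift$, $\sys = \trsp{(\var,\,1,\,0,\,0)}$, and
\[
  \appbas = \begin{pmatrix} 1 & -\var & 0 & 0 \\ \var & 0 & 0 & 0 \\ 0 & 0 & \var^2+1 & 0 \\ 0 & 0 & 0 & \var^2+1 \end{pmatrix},
  \qquad
  \cmat = \trsp{(0,\,1,\,0,\,0)}.
\]
Then $\appbas$ is $\unishift$-reduced, its rows are approximants, $[\appbas(0)\;\;\cmat]$ has full rank, and the truncated identity of Step~4 holds exactly, so Steps~1, 2 and 4 pass deterministically. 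But $\det(\appbas) = \var^2(\var^2+1)^2$ is not a monomial and $\Delta = 6 > 2 = \vsdim$. The test polynomial of Step~3 is $\var^2(-3\var^4+2\var^2+1)$, which over $\ZZ/7\ZZ$ has the five distinct roots $\{0,1,3,4,6\}$; with $\fsubset = \ZZ/7\ZZ$ (so $\#\fsubset = 7 \ge 2(\vsdim+1)$) the algorithm's failure probability is $5/7$, exceeding the target $(\vsdim+1)/\#\fsubset = 3/7$. So your repair does not close the gap: when (ii) is the only failing condition, Step~4 passes with certainty and one is left with the raw $\Delta/\#\fsubset$ bound.

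Your instinct that this regime is the delicate point is, however, well placed: the same example shows that the paper's own appeal to $\Delta \le \vsdim$ via \cite{BarBul92} is not justified for a candidate $\appbas$ that is merely $\shifts$-reduced with rows in $\modApp$, as opposed to an $\shifts$-minimal basis. A correct repair is simple and worth noting: since any basis of $\modApp$ satisfies $\deg(\det(\appbas)) \le \vsdim$, the algorithm should compare the computed quantity $\Delta$ against $\vsdim$ and return \algoword{False} deterministically whenever $\Delta > \vsdim$. With that extra check the polynomial tested at Step~3 has degree at most $\vsdim$, Schwartz--Zippel gives the bound $\vsdim/\#\fsubset$, and one takes the maximum with the Step~4 bound exactly as in the paper's proof.
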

\begin{proof}
  By definition, $\appbas$ is $\shifts$-reduced if and only if its
  $\shifts$-leading matrix $\mat{L}$ computed at Step\,\textbf{1} is
  invertible.  Thus, Step\,\textbf{1} correctly tests the property in
  \cref{it:reduced} of \cref{thm:characterization_appbas}.  It uses at most
  $\costLQUP{\nbun}{\nbun}$ operations in $\field$.  Furthermore,
  Step\,\textbf{2} correctly tests the first part of \cref{it:kerbas} of
  \cref{thm:characterization_appbas} and uses at most
  $\costLQUP{\nbun}{\nbun+\nbeq}$ operations.

  Step\,\textbf{3} performs a false-biased Monte Carlo verification of
  \cref{it:determinant} of \cref{thm:characterization_appbas}.  Indeed, since
  $\appbas$ is $\shifts$-reduced (otherwise the algorithm would have exited at
  Step~\textbf{1}), we know from \cite[Sec.\,6.3.2]{Kailath80} that
  $
    \deg(\det(\appbas)) = \Delta =
    \sumVec{\rdeg[\shifts]{\appbas}}-\sumVec{\shifts}
  $.
  Thus, $\det(\appbas)$ is
  a nonzero monomial if and only if $\det(\appbas) = \det(\appbas(1))
  \var^\Delta$.  Step\,\textbf{3} tests the latter equality by evaluation at a
  random point $\evpt$.  The algorithm only returns \algoword{False} if
  $\det(\appbas(\evpt)) \neq \det(\appbas(1)) \evpt^\Delta$, in which case
  $\det(\appbas)$ is indeed not a nonzero monomial.  Furthermore, if we have
  $\det(\appbas) \neq \det(\appbas(1)) \var^\Delta$, then the probability that
  the algorithm fails to detect this, meaning that $\det(\appbas(\evpt)) =
  \det(\appbas(1)) \evpt^\Delta$, is at most $\frac{\Delta}{\#\fsubset}$.
  Since $\Delta \le \vsdim$ according to \cite[Thm.\,4.1]{BarBul92}, this is
  also at most $\frac{\vsdim}{\#\fsubset} < \frac{\vsdim+1}{\#\fsubset}$.  

  The evaluations $\appbas(\evpt)$ and $\appbas(1)$ are computed using
  respectively at most $2(\size{\appbas}-\nbun^2)$ operations and at most
  $\size{\appbas}-\nbun^2$ additions.  Then, computing the two determinants
  $\det(\appbas(\evpt))$ and $\det(\appbas(1))$ uses at most
  $2\costLQUP{\nbun}{\nbun}+2\nbun$ operations.  Finally, computing
  $\det(\appbas(1)) \evpt^\Delta$ uses at most $2\log_2(\Delta)+1 \le
  2\log_2(\vsdim)+1$ operations.

  Summing the cost bounds for the first three steps gives
  \begin{align}
    & 3(\size{\appbas}-\nbun^2) + 3\costLQUP{\nbun}{\nbun} + \costLQUP{\nbun}{\nbun+\nbeq} + 2\nbun + 2 \log_2(\vsdim)+1 \nonumber \\
    & \le 3\size{\appbas} + 3\costLQUP{\nbun}{\nbun} + \costLQUP{\nbun}{\nbun+\nbeq} + 2 \log_2(\vsdim).
    \label{eqn:cost_certif_except_product}
  \end{align}

  Step\,\textbf{4} tests the identity $\appbas\sys = \cmat\mods \bmod
  \xDiag{\truncOrds}$, which corresponds to both \cref{it:approx} of
  \cref{thm:characterization_appbas} and the second part of \cref{it:kerbas}.
  \cref{prop:algo:verif_truncmatprod} ensures that:
  \begin{itemize}
    \item If the call to \algoname{VerifTruncMatProd} returns \algoword{False},
      we have $\appbas \sys \neq \cmat \mods \bmod \xDiag{\truncOrds}$, and
      \cref{algo:certif_appbas} correctly returns \algoword{False}.
    \item If $\appbas \sys \neq \cmat \mods \bmod \xDiag{\truncOrds}$ holds,
      the probability that \cref{algo:certif_appbas} fails to detect this (that
      is, the call at Step\,\textbf{4} returns \algoword{True}) is less than
      $\frac{\max(\orders)+1}{\#\fsubset}$.
  \end{itemize}
  A cost bound for Step\,\textbf{4} is given in
  \cref{prop:algo:verif_truncmatprod}, with a minor improvement for the present
  case given in \cref{rmk:specific_rhs}.  Summing it with the bound in
  \cref{eqn:cost_certif_except_product} gives a cost bound for
  \cref{algo:certif_appbas}, which is bounded from above by that in the
  proposition.

  Thanks to \cref{thm:characterization_appbas}, the above considerations show
  that when the algorithm returns \algoword{False}, then $\appbas$ is indeed
  not an $\shifts$-minimal basis of $\modApp$.  On the other hand, if $\appbas$
  is not an $\shifts$-minimal basis of $\modApp$, the algorithm returns
  \algoword{True} if and only if one of the probabilistic verifications in
  Steps\,\textbf{3} and\,\textbf{4} take the wrong decision.  According to
  the probabilities given above, this may happen with probability less than
  $\max(\frac{\vsdim+1}{\#\fsubset},\frac{\max(\orders)+1}{\#\fsubset}) =
  \frac{\vsdim+1}{\#\fsubset}$. 
\end{proof}

\section{Verifying a truncated product}
\label{sec:verif_truncmatprod}

In this section, we focus on the verification of truncated products of
polynomial matrices, and we give the corresponding algorithm
\algoname{VerifTruncMatProd} used in \cref{algo:certif_appbas}.

Given a truncation order $\truncOrds$ and polynomial matrices $\appbas$,
$\sys$, $\rhs$, our goal is to verify that $\appbas \sys = \rhs \bmod
\xDiag{\truncOrds}$ holds with good probability.  Without loss of generality,
we assume that the columns of $\sys$ and $\rhs$ are already truncated with
respect to the order $\truncOrds$, that is, $\cdeg{\sys} < \truncOrds$ and
$\cdeg{\rhs} < \truncOrds$.  Similarly, we assume that $\appbas$ is truncated
with respect to $\dmax = \max(\truncOrds)$, that is, $\deg(\appbas) < \dmax$.

\begin{problembox}
  \problemInfo
  {Truncated matrix product verification}
  {pbm:verif_truncmatprod}

  \dataInfos{Input}{
    \item truncation order $\truncOrds \in \truncOrdSpace$,
    \item matrix $\appbas \in \appbasSpace$ with $\deg(\appbas) < \max(\truncOrds)$,
    \item matrix $\sys \in \sysSpace$ with $\cdeg{\sys} < \truncOrds$,
    \item matrix $\rhs \in \sysSpace$ with $\cdeg{\rhs} < \truncOrds$.
  }

  \dataInfo{Output}{
    \algoword{True} if $\appbas \sys = \rhs \bmod \xDiag{\truncOrds}$,
    otherwise \algoword{False}.
  }
\end{problembox}

Obviously, our aim is to obtain a verification algorithm which has a
significantly better cost than the straightforward approach which computes the
truncated product $\appbas\sys \bmod \xDiag{\truncOrds}$ and compares it with
the matrix $\rhs$.  To take an example: if we have $\nbeq\in\bigO{\nbun}$ as
well as $\sumVec{\rdeg{\appbas}} \in \bigO{\sumVec{\truncOrds}}$ or
$\sumVec{\cdeg{\appbas}} \in \bigO{\sumVec{\truncOrds}}$, as commonly happens
in approximant basis computations, then this truncated product $\appbas\sys
\bmod \xDiag{\truncOrds}$ can be computed using $\softO{\nbun^{\expmatmul-1}
\sumVec{\truncOrds}}$ operations in $\field$.

For verifying the non-truncated product $\appbas\sys = \rhs$, the classical
approach would be to use evaluation at a random point, following ideas from
\cite{Schwartz80,Zippel79,DeMilloLipton78}.  However, evaluation does not
behave well with regards to truncation.  A similar issue was tackled in
\cite{Gi17} for the verification of the middle product and the short products
of univariate polynomials.  The algorithm of \cite{Gi17} can be adapted to work
with polynomial matrices by writing them as univariate polynomials with matrix
coefficents; for example, $\appbas$ is a polynomial $\appbas=\sum_{0\le i
<\dmax} \appbas_i\var^i$ with coefficients $\appbas_i\in\matSpace[\nbun]$.
While this leads to a verification of $\appbas \sys = \rhs \bmod
\xDiag{\truncOrds}$ with a good probability of success, it has a cost which is
close to that of computing $\appbas \sys \bmod \xDiag{\truncOrds}$.

To lower down the cost, we will combine the evaluation of truncated products
from \cite{Gi17} with Freivalds' technique \cite{Freivalds1979}.  The latter
consists in left-multiplying the matrices by some random vector $\uvec\in
\uvecSpace$, and rather checking whether $\uvec\appbas\sys = \uvec\rhs \bmod
\xDiag{\truncOrds}$; this effectively reduces the row dimension of the
manipulated matrices, leading to faster computations.  Furthermore, this does
not harm the probability of success of the verification, as we detail now.

In what follows, given a matrix $\mat{A}\in\sysSpace$ and an order
$\truncOrds\in\truncOrdSpace$, we write $\mat{A} \rem \xDiag{\truncOrds}$ for
the (unique) matrix $\mat{B} \in \sysSpace$ such that $\mat{B} = \mat{A} \bmod
\xDiag{\truncOrds}$ and $\cdeg{\mat{B}} < \truncOrds$.  For simplicity, we will
often write $\mat{A}_1\mat{A}_2 \rem \xDiag{\truncOrds}$ to actually mean
$(\mat{A}_1 \mat{A}_2) \rem \xDiag{\truncOrds}$.

\begin{lemma}
  \label{lem:proba_failure}
  Let $\fsubset$ be a finite subset of $\field$.  Let $\uvec\in\uvecSpace$ with
  entries chosen uniformly and independently at random from $\fsubset$, and let
  $\evpt\in\field$ be chosen uniformly at random from $\fsubset$.  Assuming
  $\appbas \sys \neq \rhs \bmod \xDiag{\truncOrds}$, the probability that
  $(\uvec\appbas\sys \rem \xDiag{\truncOrds})(\evpt)=\uvec\rhs(\evpt)$ is
  less than $\frac{\max(\truncOrds)}{\#\fsubset}$.
\end{lemma}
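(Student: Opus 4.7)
The plan is to reduce the event under consideration to the vanishing of a single univariate polynomial, then bound its probability in two stages — one absorbing the randomness of $\uvec$ (in the spirit of Freivalds' technique) and one for the evaluation point $\evpt$ (Schwartz--Zippel in one variable).

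First I would introduce the truncated error matrix $\mat{E} = (\appbas\sys - \rhs) \rem \xDiag{\truncOrds}$, which is nonzero by hypothesis and satisfies $\cdeg{\mat{E}} < \truncOrds$. Since both $\uvec\rhs$ and $\uvec\mat{E}$ then have column degrees below $\truncOrds$, one obtains $\uvec\appbas\sys \rem \xDiag{\truncOrds} = \uvec\rhs + \uvec\mat{E}$, so the event in question rewrites as $(\uvec\mat{E})(\evpt) = \matz$. I would then pick a column index $j_0$ for which $\mat{E}_{*,j_0}$ is nonzero, and within it a row index $i_0$ realizing the largest degree $D$ of the entries of $\mat{E}_{*,j_0}$; by construction $D \le \truncOrd_{j_0} - 1 \le \max(\truncOrds) - 1$. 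Since vanishing of the whole row vector $(\uvec\mat{E})(\evpt)$ forces its $j_0$-th coordinate to vanish, it suffices to bound the probability that the scalar polynomial $P(X) := \sum_i u_i \mat{E}_{i,j_0}(X)$ evaluates to zero at $\evpt$.

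Viewing $P$ as a polynomial in $X$ with coefficients in $\field[\uvec]$, its coefficient of $X^D$ is a linear form in $\uvec$ whose coefficient of $u_{i_0}$ equals the nonzero leading coefficient of $\mat{E}_{i_0,j_0}$. Schwartz--Zippel applied to that linear form gives $\Pr_\uvec[P \equiv 0] \le 1/\#\fsubset$, and whenever $P \not\equiv 0$ the univariate polynomial $P$ has degree at most $D$, so $\Pr_\evpt[P(\evpt) = 0 \mid P \not\equiv 0] \le D/\#\fsubset$. Decomposing $\Pr[P(\evpt) = 0]$ as $\Pr[P \equiv 0] + \Pr[P \not\equiv 0]\Pr[P(\evpt)=0 \mid P \not\equiv 0]$ and observing that this expression is non-decreasing in $\Pr[P \equiv 0]$ as long as the conditional probability lies in $[0,1]$, substituting the worst-case upper bounds yields
\[
  \Pr\bigl[P(\evpt) = 0\bigr] \;\le\; \tfrac{1}{\#\fsubset} + \Bigl(1 - \tfrac{1}{\#\fsubset}\Bigr) \tfrac{D}{\#\fsubset} \;=\; \tfrac{D+1}{\#\fsubset} - \tfrac{D}{\#\fsubset^2} \;<\; \tfrac{\max(\truncOrds)}{\#\fsubset}.
\]

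The main subtlety is securing the strict inequality: a one-shot multivariate Schwartz--Zippel on $P(\uvec,X)$ of total degree $\max(\truncOrds)$, or a bare union bound over the events ``$P \equiv 0$'' and ``$P \not\equiv 0$ and $P(\evpt) = 0$,'' only delivers the non-strict version $\le \max(\truncOrds)/\#\fsubset$. The conditional computation above produces the negative second-order correction $-D/\#\fsubset^2$, which turns it into the stated strict bound.
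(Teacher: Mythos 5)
Your proof is correct and follows essentially the same two-stage conditional decomposition as the paper, differing only in the order of the two probability bounds: the paper bounds first over $\evpt$ (Schwartz--Zippel applied to one nonzero entry of the error matrix $(\appbas\sys-\rhs)\rem\xDiag{\truncOrds}$, giving $\Pr[\mat{A}(\evpt)=\matz]\le\frac{\dmax-1}{\#\fsubset}$) and then over $\uvec$ (Freivalds on the resulting nonzero constant matrix), whereas you bound first over $\uvec$ (extracting the linear form at degree $D$ to show $\uvec\mat{E}_{*,j_0}$ is a nonzero polynomial with probability $\ge 1-1/\#\fsubset$) and then over $\evpt$ (univariate degree bound). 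Both orderings produce the same conditional expression and hence the same strict bound.
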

\begin{proof}
  Let $\mat{A} = (\appbas \sys - \rhs) \rem \xDiag{\truncOrds}$.  By
  assumption, there exists a pair $(i,j)$ such that the entry $(i,j)$ of
  $\mat{A}$ is nonzero.  Since this entry is a polynomial in $\polRing$ of
  degree less than $\dmax=\max(\truncOrds)$, the probability that $\evpt$ is a
  root of this entry is at most $\frac{\dmax-1}{\#\fsubset}$.  As a
  consequence, we have $\mat{A}(\evpt) \neq \matz \in \matSpace[\nbun][\nbeq]$
  with probability at least $1-\frac{\dmax-1}{\#\fsubset}$.  In this case,
  $\uvec \mat{A}(\evpt) = \matz$ occurs with probability at most
  $\frac{1}{\#\fsubset}$ (see \cite[Sec.\,7.1]{MotRag95}).
  
  Thus, altogether the probability that $\uvec \mat{A}(\evpt) = \matz$ is
  bounded from above by
  $
    \frac{\dmax-1}{\#\fsubset} +
    \left(1-\frac{\dmax-1}{\#\fsubset}\right)\frac{1}{\#\fsubset} <
    \frac{\dmax}{\#\fsubset}
  $,
  which concludes the proof.
\end{proof}
 
We deduce an approach to verify the truncated product: compute $\uvec
\mat{A}(\evpt) = ((\uvec \appbas \sys - \uvec \rhs) \rem
\xDiag{\truncOrds})(\evpt)$ and check whether it is zero or nonzero.  The
remaining difficulty is to compute $\uvec \mat{A}(\evpt)$ efficiently: we will
see that this can be done in $O(\size{\appbas} + \nbun \sumOrds)$ operations.

For this, we use a strategy similar to that in \cite[Lem.\,4.1]{Gi17} and
essentially based on the following formula for the truncated product.  Consider
a positive integer $\truncOrd \le \dmax$ and a vector $\col{f} \in
\polMatSpace[\nbun][1]$ of degree less than $\truncOrd$; one may think of
$\col{f}$ as a column $\matcol{\sys}{j}$ of $\sys$ and of $\truncOrd$ as the
corresponding order $\truncOrd_j$.  Writing $\col{f} = \sum_{0\le k<\truncOrd}
\col{f}_k \var^k$ with $\col{f}_k \in \matSpace[\nbun][1]$ and $\uvec\appbas =
\sum_{0\le k<\dmax} \app_k \var^k$ with $\app_k \in \matSpace[1][\nbun]$, we
have
\begin{align*}
  \uvec\appbas\col{f} \rem \var^\truncOrd
   & = \sum_{k=0}^{\truncOrd-1} \left(\sum_{i=0}^{\truncOrd-1-k} \app_i
       \var^{i}\right) \col{f}_{k} \var^{k}  \\
   & = \var^{\truncOrd-1}\sum_{k=0}^{\truncOrd-1} \left(\sum_{i=0}^{\truncOrd-1-k} \app_{\truncOrd-1-k-i} \var^{-i}\right) \col{f}_{k}  .
\end{align*}

Thus, the evaluation can be expressed as
\begin{equation}
  \label{eqn:eval_truncprod}
  (\uvec\appbas\col{f} \rem \var^\truncOrd)(\evpt)
  = \evpt^{\truncOrd-1}\sum_{k=0}^{\truncOrd-1} \hvec{\truncOrd-1-k} \col{f}_{k} ,
\end{equation}
where we define, for $0\le k < \dmax$,
\begin{equation}
  \label{eqn:hvec}
  \hvec{k} =  (\uvec \appbas \rem \var^{k+1})(\evpt^{-1}) = \sum_{i=0}^k \app_{k-i}
  \evpt^{-i} \in \matSpace[1][\nbun].
\end{equation}

These identities give an algorithm to compute the truncated product evaluation
$(\uvec\appbas\col{f} \rem \var^\truncOrd)(\evpt)$, which we sketch as follows:
\begin{itemize}
  \item apply Horner's method to the reversal of
    $\uvec\appbas\rem\var^{\truncOrd}$ at the point $\evpt^{-1}$, storing the
    intermediate results which are exactly the $\truncOrd$ vectors
    $\hvec{0},\ldots,\hvec{\truncOrd-1}$;
  \item compute the scalar products $\lambda_k = \hvec{\truncOrd-1-k}\col{f}_k$
    for $0\le k < \truncOrd$;
  \item compute $\evpt^{\truncOrd-1}$ and then $\evpt^{\truncOrd-1}\sum_{0\le k<\truncOrd}
    \lambda_k$.
\end{itemize}

The last step gives the desired evaluation according to
\cref{eqn:eval_truncprod}.  In our case, this will be applied to each column
$\col{f} = \matcol{\sys}{j}$ for $1\le j\le\nbeq$.  We will perform the first
item only once to obtain the $\dmax$ vectors $\hvec{0},\ldots,\hvec{\dmax-1}$,
since they do not depend on $\col{f}$.

\begin{algobox}
  \algoInfo
  {VerifTruncMatProd}
  {algo:verif_truncmatprod}

  \dataInfos{Input}{
    \item truncation order $\truncOrds = (\truncOrd_1,\ldots,\truncOrd_\nbeq) \in \truncOrdSpace$,
    \item matrix $\appbas \in \appbasSpace$ such that $\deg(\appbas) < \dmax = \max(\truncOrds)$,
    \item matrix $\sys = [\rpol_{ij}] \in \sysSpace$ with $\cdeg{\sys} < \truncOrds$,
    \item matrix $\rhs \in \sysSpace$ with $\cdeg{\rhs} < \truncOrds$.
  }

  \dataInfo{Output}{
    \algoword{True} if $\appbas \sys = \rhs \bmod \xDiag{\truncOrds}$,
    otherwise \algoword{True} or \algoword{False}.
  }

  \algoSteps{
    \item \inlcomment{Main objects for verification} \\
      $\fsubset \assign$ a finite subset of $\field$ \\
      $\evpt \assign$ element of $\field$ chosen uniformly at random from $\fsubset$ \\
      $\row{u} \assign$ vector in $\matSpace[1][\nbun]$ with entries chosen
      uniformly and independently at random from $\fsubset$
    \item \inlcomment{Freivalds: row dimension becomes $1$} \\
      $\app \assign \uvec \appbas$ \eolcomment{in $\polMatSpace[1][\nbun]$, degree $<\dmax$} \\
      $\row{g} \assign \uvec \rhs$ \eolcomment{in $\polMatSpace[1][\nbeq]$, $\cdeg{\row{g}} < \truncOrds$}
    \item \inlcomment{Evaluation of right-hand side: $\uvec \rhs(\evpt)$} \\
      write $\row{g} = [g_1 \; \cdots \; g_\nbeq]$ with $g_j \in \polRing$ of degree $<\truncOrd_j$ \\
      \algoword{For} $j$ \algoword{from} $1$ \algoword{to} $\nbeq$: \\
      \phantom{For} $e_j \assign g_j(\evpt)$
    \item \inlcomment{Truncated evaluations $\hvec{0},\ldots,\hvec{\dmax-1}$} \\
      write $\app = \sum_{0\le k<\dmax} \app_k \var^k$ with $\app_k \in \matSpace[1][\nbun]$ \\
      $\hvec{0} \assign \app_{0}$ \\
      \algoword{For} $k$ \algoword{from} $1$ \algoword{to} $\dmax-1$: \\
      \phantom{For} $\hvec{k} \assign \app_k + \evpt^{-1} \hvec{k-1}$
    \item \inlcomment{Evaluation of left-hand side: $(\uvec\appbas\sys \rem \xDiag{\truncOrds})(\evpt)$} \\
      \algoword{For} $j$ \algoword{from} $1$ \algoword{to} $\nbeq$: \eolcomment{process column $\matcol{\sys}{j}$} \\
      \phantom{For} write $\matcol{\sys}{j} = \sum_{0\le k<\truncOrd_j} \col{f}_k \var^k$ \\
      \phantom{For} $(\lambda_k)_{0\le k<\truncOrd_j} \assign (\hvec{\truncOrd_j-1-k} \cdot \col{f}_k)_{0\le k<\truncOrd_j}$ \\
      \phantom{For} $e'_j \assign \evpt^{\truncOrd_j-1} \sum_{0\le k<\truncOrd_j} \lambda_k$
    \item \algoword{If} $e_j \neq e'_j$ for some $j \in \{1,\ldots,\nbeq\}$ \algoword{then} \algoword{return} \algoword{False} \\
      \algoword{Else} \algoword{return} \algoword{True}
  }
\end{algobox}

\begin{proposition}
  \label{prop:algo:verif_truncmatprod}
  \cref{algo:verif_truncmatprod} uses at most
  \[
    2 \size{\appbas} + (6\nbun+1) \sumOrds + 2\nbeq\log_2(\dmax)
    \in \bigO{\size{\appbas} + \nbun\sumOrds + \nbeq\log_2(\dmax)}
  \]
  operations in $\field$, where $\dmax \le \sumOrds$ is the largest of the
  truncation orders.  It is a false-biased Monte Carlo algorithm.  If $\appbas
  \sys \neq \rhs \bmod \xDiag{\truncOrds}$, the probability that it outputs
  \algoword{True} is less than $\frac{\dmax}{\#\fsubset}$, where $\fsubset$ is
  the finite subset of $\field$ from which random field elements are drawn.
\end{proposition}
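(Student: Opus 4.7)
The plan is to verify that \cref{algo:verif_truncmatprod} faithfully implements the test $(\uvec\appbas\sys \rem \xDiag{\truncOrds})(\evpt) = \uvec\rhs(\evpt)$ whose analysis is carried out in \cref{lem:proba_failure}, and then to sum the costs of its six steps.

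For correctness, I first note that Step\,\textbf{3} evaluates $\row{g} = \uvec\rhs$ entrywise at $\evpt$, so the vector $(e_1,\ldots,e_\nbeq)$ equals $\uvec\rhs(\evpt)$. A one-line induction on $k$ based on the Horner-like recurrence $\hvec{k} = \app_k + \evpt^{-1}\hvec{k-1}$ from Step\,\textbf{4} shows that the vectors constructed there satisfy the closed form \cref{eqn:hvec}. Applying \cref{eqn:eval_truncprod} with $\col{f} = \matcol{\sys}{j}$ and $\truncOrd = \truncOrd_j$, the value $e'_j$ assembled in Step\,\textbf{5} is exactly $(\uvec\appbas\matcol{\sys}{j} \rem \var^{\truncOrd_j})(\evpt)$, so $(e'_1,\ldots,e'_\nbeq) = (\uvec\appbas\sys \rem \xDiag{\truncOrds})(\evpt)$. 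The algorithm therefore returns \algoword{True} if and only if these two vectors coincide. By \cref{lem:proba_failure}, when $\appbas\sys \neq \rhs \bmod \xDiag{\truncOrds}$ the probability of a wrong \algoword{True} is less than $\dmax/\#\fsubset$, and when equality holds the algorithm always returns \algoword{True}; this is exactly the false-biased Monte Carlo guarantee.

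For the cost, I would analyze each step in turn. Computing $\uvec\appbas$ in Step\,\textbf{2} scales and sums the rows of $\appbas$, using at most $2\size{\appbas}$ operations by the definition of $\size{\appbas}$; the analogous product $\uvec\rhs$ is handled column-wise and bounded by $2\nbun\sumOrds$ since $\cdeg{\rhs} < \truncOrds$. Step\,\textbf{3} applies Horner's rule to $\nbeq$ polynomials of total degree at most $\sumOrds$. Step\,\textbf{4} performs one inversion plus $\dmax - 1$ vector updates of dimension $\nbun$, each of cost $2\nbun$; bounding $\dmax \le \sumOrds$ yields a contribution of at most $2\nbun\sumOrds$. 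Step\,\textbf{5}, for each $j$, combines $\truncOrd_j$ dot products of $\nbun$-dimensional vectors (contributing $2\nbun\sumOrds$ after summation over $j$), a running sum of $\truncOrd_j$ scalars, and a square-and-multiply evaluation of $\evpt^{\truncOrd_j - 1}$ in at most $2\log_2(\truncOrd_j) \le 2\log_2(\dmax)$ operations. Collecting all contributions gives the claimed bound $2\size{\appbas} + (6\nbun+1)\sumOrds + 2\nbeq\log_2(\dmax)$.

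The main obstacle is simply careful bookkeeping of these constants, in particular checking that the per-column exponentiation and dot-product costs of Step\,\textbf{5} combine cleanly into the global terms $2\nbun\sumOrds$ and $2\nbeq\log_2(\dmax)$. Correctness presents no conceptual subtlety beyond this: it reduces to verifying that the algorithm instantiates the formulas \cref{eqn:eval_truncprod} and \cref{eqn:hvec} column by column, and the probabilistic guarantee is inherited directly from \cref{lem:proba_failure}.
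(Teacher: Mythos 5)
Your proposal follows the same structure as the paper's proof: verify that the algorithm computes $[e_j]_j = \uvec\rhs(\evpt)$ and $[e'_j]_j = (\uvec\appbas\sys\rem\xDiag{\truncOrds})(\evpt)$ via \cref{eqn:hvec} and \cref{eqn:eval_truncprod}, appeal to \cref{lem:proba_failure} for the probability of a false positive, and bound the cost of each step. The correctness and probability parts are fine, and the cost analysis uses the same decomposition.

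However, there is an arithmetic slip in the bookkeeping: if you literally sum the per-step bounds you give (Step~2: $2\size{\appbas}+2\nbun\sumOrds$; Step~3: $2\sumOrds$; Step~4: $2\nbun\sumOrds$; Step~5: $2\nbun\sumOrds$ for dot products, plus $\sumOrds$ for the running sums and multiplications by $\evpt^{\truncOrd_j-1}$, plus $2\nbeq\log_2(\dmax)$ for the exponentiations; Step~6: $\nbeq$ comparisons, which you omit), you obtain roughly $2\size{\appbas} + (6\nbun+3)\sumOrds + \nbeq + 2\nbeq\log_2(\dmax)$, which exceeds the stated $(6\nbun+1)\sumOrds$. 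The paper reaches its constant by keeping tighter intermediate estimates — e.g.\ $(2\nbun-1)\sumOrds$ rather than $2\nbun\sumOrds$ for the products $\uvec\rhs$ and for the dot products, $2(\sumOrds-\nbeq)$ for Step~3, and $2\log_2(\truncOrd_j-1)$ rather than $2\log_2(\dmax)$ per exponentiation — and only at the very end relaxing to $(6\nbun+1)\sumOrds + 2\nbeq\log_2(\dmax)$ via $\dmax\le\sumOrds$ and $\truncOrd_j-1<\dmax$. So the explicit constant in the proposition is not actually established by your accounting as written; the asymptotic bound is fine, but to match the stated constant you need to carry the sharper per-step counts through to the end.
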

\begin{proof}
  The discussion above shows that this algorithm correctly computes
  $[e_j]_{1\le j\le\nbeq} = \uvec\rhs(\evpt)$ and $[e'_j]_{1\le j\le\nbeq} =
  (\uvec\appbas\sys \rem \xDiag{\truncOrds})(\evpt)$. If it returns
  \algoword{False}, then there is at least one $j$ for which $e'_j \neq e_j$,
  thus we must have $\uvec\appbas\sys \rem \xDiag{\truncOrds} \neq \uvec\rhs$
  and therefore $\appbas\sys \neq \rhs \bmod \xDiag{\truncOrds}$.  Besides, the
  algorithm correctly returns \algoword{True} if $\appbas\sys = \rhs \bmod
  \xDiag{\truncOrds}$.

  The analysis of the probability of failure (the algorithm returns
  \algoword{True} while $\appbas\sys \neq \rhs \bmod \xDiag{\truncOrds}$) is a
  direct consequence of \cref{lem:proba_failure}.

  Step\,\textbf{2} uses at most $2\size{\appbas} + (2\nbun-1)\sumOrds$
  operations in $\field$.  The Horner evaluations at Steps\,\textbf{3}
  and\,\textbf{4} require at most $2(\sumOrds-\nbeq)$ and at most
  $1+2\nbun(\dmax-1)$ operations, respectively.  Now, we consider the $j$-th
  iteration of the loop at Step\,\textbf{5}.  The scalar products
  $(\lambda_k)_{0\le k<\truncOrd_j}$ are computed using at most
  $(2\nbun-1)\truncOrd_j$ operations; the sum and multiplication by
  $\evpt^{\truncOrd_j-1}$ giving $e'_j$ use at most $\truncOrd_j+
  2\log_2(\truncOrd_j-1)$ operations.  Summing over $1\le j\le \nbeq$, this
  gives a total of at most
  $2\nbun\sumOrds+2\log_2((\truncOrd_1-1)\cdots(\truncOrd_\nbeq-1))$ operations
  for Step\,\textbf{5}.  Finally, Step\,\textbf{6} uses at most $\nbeq$
  comparisons of two field elements.  Summing these bounds for each step yields
  the cost bound
  \begin{equation}
    \label{eqn:cost_bound_verif_truncmatprod}
    2\size{\appbas} + (4\nbun+1)\sumOrds + 2\nbun(\dmax-1) - \nbeq + 2\log_2((\truncOrd_1-1)\cdots(\truncOrd_\nbeq-1)),
  \end{equation}
  which is at most the quantity in the proposition.
\end{proof}

In the certification of approximant bases, we want to verify a truncated matrix
product in the specific case where each entry in the column $j$ of $\rhs$ is
simply zero or a monomial of degree $\truncOrd_j-1$.  Then, a slightly better
cost bound can be given, as follows.

\begin{remark}
  \label{rmk:specific_rhs}
  Assume that $\truncOrds = (\order_1+1,\ldots,\order_\nbeq+1)$ and $\rhs =
  \cmat \mods$, for some $\orders=(\order_1,\ldots,\order_\nbeq)\in\orderSpace$
  and some constant $\cmat\in\cmatSpace$.  Then, the computation of $\uvec\rhs$
  at Step\,\textbf{2} uses at most $(2\nbun-1)\nbeq$ operations in $\field$.
  Besides, since the polynomial $g_j$ at Step\,\textbf{3} is either zero or a
  monomial of degree $\order_j$, its evaluation $e_j$ is computed using at most
  $2\log_2(\order_j) + 1$ operations via repeated squaring
  \cite[Sec.\,4.3]{vzGathen13}.  Thus, Step\,\textbf{3} uses at most
  $2\log_2(\order_1\cdots\order_\nbeq)+\nbeq$ operations.  As a result,
  defining $\vsdim = \sumVec{\orders}$, the cost bound in
  \cref{eqn:cost_bound_verif_truncmatprod} is lowered to
  \begin{align*}
    & 2 \size{\appbas} + 2\nbun (\sumOrds + \dmax-1 + \nbeq) + \nbeq + 4\log_2(\order_1\cdots\order_\nbeq) + 1 \\
    = { } & 2 \size{\appbas} + 2\nbun (\vsdim + \max(\orders) + 2\nbeq) + \nbeq + 4\log_2(\order_1\cdots\order_\nbeq) + 1.
    \qed
  \end{align*}
\end{remark}

\section{Computing the certificate}
\label{sec:compute_certificate}

\subsection{Context}
\label{sec:compute_certificate:intro}

In this section, we show how to efficiently compute the certificate
$\cmat\in\cmatSpace$, which is the term of degree $0$ of the product $\appbas
\sys \xDiag{-\orders}$, whose entries are Laurent polynomials (they are in
$\polRing$ if and only if the rows of $\appbas$ are approximants).
Equivalently, the column $\matcol{\cmat}{j}$ is the term of degree $\order_j$
of the column $j$ of $\appbas \sys$, where $\orders =
(\order_1,\ldots,\order_\nbeq)$.

We recall the notation $\vsdim = \order_1+\cdots+\order_\nbeq$.  Note that,
without loss of generality, we may truncate $\appbas$ so that $\deg(\appbas)
\le \max(\orders)$.

For example, suppose that the dimensions and the order are balanced: $\nbun =
\nbeq$ and $\orders = (\vsdim/\nbun,\ldots,\vsdim/\nbun)$.  Then,
$\cmat\in\matSpace[\nbun]$ is the coefficient of degree $\vsdim/\nbun$ of the
product $\appbas \sys$, where $\appbas$ and $\sys$ are $\nbun\times\nbun$
matrices over $\polRing$.  Thus $\cmat$ can be computed using $\vsdim/\nbun$
multiplications of $\nbun\times\nbun$ matrices over $\field$, at a total cost
$\bigO{\nbun^{\expmatmul-1} \vsdim}$.

Going back to the general case, the main obstacle to obtain similar efficiency
is that both the degrees in $\appbas$ and the order $\orders$ (hence the
degrees in $\sys$) may be unbalanced.  Still, we have $\cdeg{\sys} < \orders$
with sum $\sumVec{\orders}=\vsdim$ and, as stated in the introduction, we may
assume that either $\sumVec{\rdeg{\appbas}} \in \bigO{\vsdim}$ or
$\sumVec{\cdeg{\appbas}} \le \vsdim$ holds.  In this context, both $\sys$ and
$\appbas$ are represented by $\bigO{\nbun \vsdim}$ field elements.

We will generalize the method above for the balanced case to this general
situation with unbalanced degrees, achieving the same cost
$\bigO{\nbun^{\expmatmul-1} \vsdim}$.  As a result, computing the certificate
$\cmat$ has negligible cost compared to the fastest known approximant basis
algorithms.  Indeed, the latter are in $\softO{\nbun^{\expmatmul-1}\vsdim}$,
involving logarithmic factors in $\vsdim$ coming both from polynomial
arithmetic and from divide and conquer approaches.  We refer the reader to
\cite[Thm.\,5.3]{ZhoLab12} and \cite[Thm.\,1.4]{JeNeScVi16} for more details on
these logarithmic factors.

We first remark that $\cmat$ can be computed by naive linear algebra using
$\bigO{\nbun^2\vsdim}$ operations.  Indeed, writing $\rdeg \appbas = (r_1,
\ldots, r_\nbun)$, we have the following explicit formula for each entry in
$\cmat$:
\begin{equation*}
  \cmat_{i,j} = \sum_{k=1}^{\min(r_i,\order_j)}
  \appbas_{i,*,k} \, \sys_{*,j,\order_j-k} \;,
\end{equation*}
where $\appbas_{i,*,k}$ is the coefficient of degree $k$ of the row $i$ of
$\appbas$ and similar notation is used for $\sys$.  Then, since
$\min(r_i,\order_j)\leq\order_j$, the column $\matcol{\cmat}{j}$ is computed
via $\nbun\order_j$ scalar products of length $\nbun$, using
$\bigO{\nbun^2\order_j}$ operations. Summing this for $1\le j\le\nbeq$
yields $\bigO{\nbun^2\vsdim}$.

This approach considers each column of $\sys$ separately, allowing us to
truncate at precision $\order_j+1$ for the column $j$ and thus to rule out the
issue of the unbalancedness of the degrees in $\appbas$.  However, this also
prevents us from incorporating fast matrix multiplication.  In our efficient
method, we avoid considering columns or rows separately, while still managing
to handle the unbalancedness of the degrees in both $\appbas$ and $\sys$.  Our
approach bears similarities with algorithms for polynomial matrix
multiplication with unbalanced degrees (see for example
\cite[Sec.\,3.6]{ZhLaSt12}).

\subsection{Sparsity and degree structure}
\label{sec:compute_certificate:idea}

\emph{Below, we first detail our method assuming $\sumVec{\rdeg{\appbas}} \in
\bigO{\vsdim}$; until further notice, $\gamma \ge 1$ is a real number such that
$\sumVec{\rdeg{\appbas}} \le \gamma\vsdim$.}

To simplify the exposition, we start by replacing the tuple $\orders$ by the
uniform bound $\order = \max(\orders)$.  To achieve this, we consider the
matrix $\syss = \sys \xDiag{\order-\orders}$, where $\order-\orders$ stands for
$(\order-\order_1,\ldots,\order-\order_\nbeq)$: then, $\cmat$ is the
coefficient of degree $\order$ in $\appbas \syss$.

Since $\cdeg{\sys} < \orders$, we have $\deg(\syss)<\order$.  The fact that
$\sys$ has column degree less than $\orders$ translates into the fact that
$\syss$ has column valuation at least $\order-\orders$ (and degree less than
$\order$); like $\sys$, this matrix $\syss$ is represented by $\nbun \vsdim$
field elements. Recalling the assumption $\deg(\appbas) \le \order$, we can
write $\appbas = \sum_{k=0}^{\order} \appbas_k \var^k$ and $\syss =
\sum_{k=0}^{\order} \syss_k \var^k$, where $\appbas_k \in \matSpace[\nbun]$ and
$\syss_k \in \matSpace[\nbun][\nbeq]$ for all $k$ (note that $\syss_\order =
\matz$).  Then, our goal is to compute the matrix
\begin{equation}
  \label{eqn:cmat}
  \cmat = \sum_{k=1}^{\order} \appbas_k \syss_{\order-k}.
\end{equation}

The essential remark to design an efficient algorithm is that each matrix
$\appbas_k$ has only few nonzero rows when $k$ becomes large, and each matrix
$\syss_{\order-k}$ has only few nonzero columns when $k$ becomes large.  To
state this formally, we define two sets of indices, for the rows of degree at
least $k$ in $\appbas$ and for the orders at least $k$ in $\orders$:
\begin{align*}
  \largeRows{k} & = \{ i \in \{1,\ldots,\nbun\} \mid \rdeg{\matrow{\appbas}{i}} \ge k \}, \\
  \largeOrders{k} & = \{ j \in \{1,\ldots,\nbeq\} \mid \order_j \ge k \}.
\end{align*}
The latter corresponds to the set of indices of columns of $\sys$ which are
allowed to have degree $\ge k-1$ or, equivalently, to the set of indices of
columns of $\syss$ which are allowed to have valuation $\le\order-k$.

\begin{lemma}
  \label{lem:control_nrows_ncols}
  For a given $k \in \{1,\ldots,\order\}$: if $i \not\in \largeRows{k}$, then
  the row $i$ of $\appbas_k$ is zero; if $j \not\in \largeOrders{k}$, then the
  column $j$ of $\syss_{\order-k}$ is zero.  In particular, $\appbas_k$ has at
  most $\#\largeRows{k} \le \gamma\vsdim/ k$ nonzero rows and
  $\syss_{\order-k}$ has at most $\#\largeOrders{k} \le \vsdim/k$ nonzero
  columns.
\end{lemma}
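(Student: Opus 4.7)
The plan is to handle the four statements separately, dealing first with the two sparsity patterns and then with the two cardinality bounds; each part is a direct consequence of the definitions.

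For the first claim, I would observe that if $i \not\in \largeRows{k}$ then $\deg(\matrow{\appbas}{i}) < k$ by definition, so in the expansion $\appbas = \sum_{\ell} \appbas_\ell \var^\ell$ the row $i$ of $\appbas_\ell$ vanishes for every $\ell \ge k$; in particular it vanishes for $\ell = k$. For the second claim, I would use the fact that $\matcol{\syss}{j} = \var^{\order-\order_j} \matcol{\sys}{j}$ and $\deg(\matcol{\sys}{j}) < \order_j$, which means $\matcol{\syss}{j}$ has valuation at least $\order - \order_j$ (whenever it is nonzero). Thus the coefficient of degree $\order - k$ in this column is zero as soon as $\order - k < \order - \order_j$, i.e., $\order_j < k$, i.e., $j \not\in \largeOrders{k}$.

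For the cardinality bounds, I would use a standard averaging argument. Writing $\rdeg{\appbas} = (r_1, \ldots, r_\nbun)$ with $r_i \ge 0$, if there are $N$ indices with $r_i \ge k$, then $\sumVec{\rdeg{\appbas}} \ge Nk$, so $N \le \sumVec{\rdeg{\appbas}}/k \le \gamma\vsdim/k$ by our assumption on $\appbas$. The same argument applied to $\orders$, using $\sumVec{\orders} = \vsdim$, gives $\#\largeOrders{k} \le \vsdim/k$.

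There is no real obstacle here: the bulk of the work was already done by choosing the right definitions of $\largeRows{k}$ and $\largeOrders{k}$, together with the setup $\syss = \sys\xDiag{\order-\orders}$. The only point that deserves a line of care is the valuation shift in $\syss$, since we are reading off a single coefficient (degree $\order-k$) of a column whose support lies in the range $[\order-\order_j, \order-1]$.
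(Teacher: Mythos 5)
Your proof is correct and follows essentially the same approach as the paper's: unwind the definitions to get the sparsity pattern (using the valuation shift built into $\syss = \sys\xDiag{\order-\orders}$), then apply the standard pigeonhole/averaging argument for the cardinality bounds. Your write-up is a little more detailed than the paper's, in particular in making explicit the valuation observation for $\syss$ and the $r_i \ge 0$ hypothesis underlying the averaging step.
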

\begin{proof}
  The row $i$ of $\appbas_k$ is the coefficient of degree $k$ of the row $i$ of
  $\appbas$.  If it is nonzero, we must have $i \in \largeRows{k}$.  Similarly,
  the column $j$ of $\syss_{\order-k}$ is the coefficient of degree $\order-k$
  of the column $j$ of $\syss=\sys\xDiag{\order-\orders}$.  If it is nonzero,
  we must have $\order-k \ge \order-\order_j$, hence $k \in \largeOrders{k}$.  

  The upper bounds on the cardinalities of $\largeRows{k}$ and
  $\largeOrders{k}$ follow by construction of these sets: we have $k \cdot
  \#\largeOrders{k} \le \sumVec{\orders} = \vsdim$, and also $k \cdot
  \#\largeRows{k} \le \sumVec{\rdeg{\appbas}}$ with $\sumVec{\rdeg{\appbas}}
  \le \gamma \vsdim$ by assumption.
\end{proof}

\subsection{Algorithm and cost bound}
\label{sec:compute_certificate:algo}

Following \cref{lem:control_nrows_ncols}, in the computation of $\cmat$ based
on \cref{eqn:cmat} we may restrict our view of $\appbas_k$ to its submatrix
with rows in $\largeRows{k}$, and our view of $\syss_k$ to its submatrix with
columns in $\largeOrders{k}$.  For example, if $k>\gamma\vsdim / \nbun$ and
$k>\vsdim/\nbeq$, the matrices in the product $\appbas_k \syss_k$ have
dimensions at most $\lfloor\gamma\vsdim/k\rfloor \times \nbun$ and $\nbun
\times\lfloor\vsdim/k\rfloor$.  These remarks on the structure and sparsity of
$\appbas_k$ and $\syss_k$ lead us to \cref{algo:certif_comp}.

\begin{algobox}
  \algoInfo
  {CertificateComp}
  {algo:certif_comp}

  \dataInfos{Input}{
    \item order $\orders \in \orderSpace$,
    \item matrix $\sys \in \sysSpace$ such that $\cdeg{\sys} < \orders$,
    \item matrix $\appbas \in \appbasSpace$ such that $\deg(\appbas) \le \max(\orders)$.
  }

  \dataInfo{Output}{
    the coefficient $\cmat\in \cmatSpace$ of degree $0$
    of $\appbas\sys \xDiag{-\orders}$.
  }

  \algoSteps{
    \item $(r_1,\dots,r_\nbun) \assign \rdeg{\appbas}$
    \item $\cmat \assign \matz \in \cmatSpace$
    \item \algoword{For} $k$ \algoword{from } $1$ \algoword{to } $\max(\orders)$: \\
      \phantom{For} $\largeRows{}  \assign \{ i \in \{1,\ldots,\nbun\} \mid r_i \ge k \}$ \\
      \phantom{For} $\largeOrders{} = \{c_1, \ldots , c_t\} \assign \{ j \in \{1,\ldots,\nbeq\} \mid \order_j \ge k \}$\\
      \phantom{For} $\mat{A} \in \matSpace[\#\largeRows{}][\nbun] \assign$ coefficient of degree $k$ of $\matrow{\appbas}{\largeRows{}}$ \\
      \phantom{For} $\mat{B} \in \matSpace[\nbun][t] \assign$ for all $1 \le j \le t$,
              $\matcol{\mat{B}}{j}$ is the coefficient of \\
              \phantom{For $\mat{B} \in \matSpace[\nbun][t] \assign$} 
              degree $\order_j-k$ of $\matcol{\sys}{c_j}$  \\
      \phantom{For} $\submat{\cmat}{\largeRows{}}{\largeOrders{}} \assign \submat{\cmat}{\largeRows{}}{\largeOrders{}} + \mat{A} \mat{B}$
    \item \algoword{Return} $\cmat$
  }
\end{algobox}

\begin{proposition}
  \label{prop:algo:certif_comp}
  \cref{algo:certif_comp} is correct.  Assuming that $\nbun\in\bigO{\vsdim}$
  and $\sumVec{\rdeg{\appbas}} \in \bigO{\vsdim}$, where
  $\vsdim=\sumVec{\orders}$, it uses $\bigO{\nbun^{\omega-1}\vsdim}$ operations
  in $\field$ if $\nbeq\le\nbun$ and
  $\bigO{\nbun^{\omega-1}\vsdim\log(\nbeq/\nbun)}$ operations in $\field$ if
  $\nbeq>\nbun$.
\end{proposition}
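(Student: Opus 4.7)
The plan is to prove correctness using \cref{eqn:cmat} together with \cref{lem:control_nrows_ncols}, and then bound the cost by a careful analysis of the rectangular matrix multiplications performed at each iteration.

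For correctness, I would start from \cref{eqn:cmat}, which gives $\cmat = \sum_{k=1}^{\order} \appbas_k \syss_{\order-k}$ with $\syss = \sys \xDiag{\order-\orders}$ and $\order = \max(\orders)$. At iteration $k$, \cref{lem:control_nrows_ncols} guarantees that $\appbas_k$ has zero rows outside $\largeRows{k}$ and $\syss_{\order-k}$ has zero columns outside $\largeOrders{k}$. Hence only the submatrix of $\cmat$ indexed by $\largeRows{k} \times \largeOrders{k}$ is updated, which is exactly the $\mat{A}\mat{B}$ constructed in the loop body; the identity $\matcol{\syss}{c_j} = \var^{\order-\order_{c_j}}\matcol{\sys}{c_j}$ justifies reading $\matcol{\mat{B}}{j}$ as the coefficient of degree $\order_j-k$ of $\matcol{\sys}{c_j}$.

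For the cost analysis, I would set $\rho_k = \#\largeRows{k}$ and $\sigma_k = \#\largeOrders{k}$. By \cref{lem:control_nrows_ncols} and the hypotheses, $\rho_k \le \nbun$, $\sigma_k \le \nbeq$, $\sum_k \rho_k \in \bigO{\vsdim}$, and $\sum_k \sigma_k = \vsdim$. Each iteration performs one matrix product of sizes $\rho_k \times \nbun$ by $\nbun \times \sigma_k$, plus $\bigO{\nbun+\nbeq}$ bookkeeping; the bookkeeping sums to $\bigO{(\nbun+\nbeq)\max(\orders)}$ over all iterations, which under $\nbun \in \bigO{\vsdim}$ and $\nbeq \le \vsdim$ is dominated by the matrix-product cost derived below.

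The central step is bounding the per-iteration matrix product. Using standard block decomposition, an $a \times \nbun$ by $\nbun \times b$ product with $a \le \nbun$ costs $\bigO{\lceil b/\nbun\rceil \nbun^2 a^{\omega-2}}$ field operations. Applied with $a = \rho_k$ and $b = \sigma_k$, this yields $\bigO{\sigma_k \nbun^{\omega-1}}$ per iteration when $\sigma_k \le \nbun$, and $\bigO{\lceil \sigma_k/\nbun\rceil \nbun^\omega}$ when $\sigma_k > \nbun$. Summing over $k$: the first bound contributes at most $\bigO{\nbun^{\omega-1} \sum_k \sigma_k} = \bigO{\nbun^{\omega-1} \vsdim}$; the second, active only when $\nbeq > \nbun$, is controlled by the fact that at most $\vsdim/\nbun$ iterations can have $\sigma_k > \nbun$ (since $\sum_k \sigma_k = \vsdim$), giving another contribution of $\bigO{\nbun^{\omega-1} \vsdim}$. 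This establishes a total cost $\bigO{\nbun^{\omega-1} \vsdim}$, which matches the announced bound when $\nbeq \le \nbun$ and is at most $\bigO{\nbun^{\omega-1} \vsdim \log(\nbeq/\nbun)}$ when $\nbeq > \nbun$.

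The main obstacle is handling the per-iteration cost cleanly when $\sigma_k$ exceeds $\nbun$: the ceiling $\lceil \sigma_k/\nbun\rceil$ contributes an additive $\nbun^\omega$ per wide iteration that must be amortized against the bound $\#\{k : \sigma_k > \nbun\} \le \vsdim/\nbun$ derived from $\sum_k \sigma_k = \vsdim$; otherwise a direct summation would introduce unwanted factors in $\nbeq$.
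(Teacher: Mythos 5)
Your correctness argument is fine and matches the paper. The cost analysis, however, contains a genuine gap.

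Your per-iteration bound is wrong in the regime $\sigma_k \le \nbun$. You state that a product of shapes $\rho_k \times \nbun$ by $\nbun \times \sigma_k$ costs $\bigO{\lceil \sigma_k/\nbun\rceil \nbun^2 \rho_k^{\omega-2}}$, and then claim this ``yields $\bigO{\sigma_k \nbun^{\omega-1}}$'' when $\sigma_k \le \nbun$. It does not: with $\sigma_k \le \nbun$ the formula gives $\bigO{\nbun^2 \rho_k^{\omega-2}}$, and this can exceed $\sigma_k \nbun^{\omega-1}$ by a factor as large as $(\nbun/\sigma_k)^{3-\expmatmul}$. For instance with $\rho_k = \nbun$ and $\sigma_k = 1$ the true cost is $\Theta(\nbun^2)$ while your claimed bound is $\nbun^{\expmatmul-1}$, which is strictly smaller for $\expmatmul < 3$. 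This regime is realistic: all rows of $\appbas$ can have degree on the order of $\vsdim/\nbun$ (so $\rho_k = \nbun$ for all $k \lesssim \vsdim/\nbun$) while $\orders$ is very unbalanced (so $\sigma_k$ drops to a constant almost immediately).

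Moreover, even after replacing your per-iteration bound with a correct one, your summation strategy is too weak. You invoke only the aggregate facts $\sum_k \rho_k \in \bigO{\vsdim}$ and $\sum_k \sigma_k = \vsdim$, never the \emph{pointwise} decay $\rho_k \le \gamma\vsdim/k$ and $\sigma_k \le \vsdim/k$ given by \cref{lem:control_nrows_ncols}. The aggregate constraints alone do not yield $\bigO{\nbun^{\expmatmul-1}\vsdim}$; one can check that they only give $\bigO{\nbun^2\vsdim}$. The paper's proof crucially splits the loop range at the thresholds $\lceil\vsdim/\nbeq\rceil$ and $\lceil\gamma\vsdim/\nbun\rceil$ using the pointwise bounds, and then controls the sums $\sum_k (\vsdim/k)^{\expmatmul-1}$ and $\sum_k k^{-1}$ via \cref{lem:bound_on_sum_powers}; the geometric decay of $(\vsdim/k)^{\expmatmul-1}$ is what makes the tail converge, and this mechanism is absent from your argument. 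As a symptom of the gap, you end up claiming the log-free bound $\bigO{\nbun^{\expmatmul-1}\vsdim}$ even when $\nbeq > \nbun$, which is stronger than \cref{prop:algo:certif_comp} and is in fact not achieved by this algorithm in the worst case: with $\rho_k \approx \nbun$ for $k \lesssim \vsdim/\nbun$ and $\sigma_k \approx \min(\nbeq,\vsdim/k)$, the iterations with $\vsdim/\nbeq \le k \le \vsdim/\nbun$ genuinely contribute $\Theta(\nbun^{\expmatmul-1}\vsdim\log(\nbeq/\nbun))$.
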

\begin{proof}
  For the correctness, note that for all $j$ the coefficient of degree
  $\order_j-k$ of $\matcol{\sys}{j}$ is the coefficient of degree $\order-k$ of
  $\matcol{\syss}{j}$.  Thus, using notation from
  \cref{sec:compute_certificate:idea}, the matrix $\mat{B}$ at the iteration
  $k$ of the loop is exactly the submatrix of $\syss_{\order-k}$ of its columns
  in $\largeOrders{k}$.  Therefore, the loop in \cref{algo:certif_comp} simply
  applies \cref{eqn:cmat}, discarding from $\appbas_k$ and $\sys_{\order-k}$
  rows and columns which are known to be zero.

  Now, we estimate the cost of updating $\cmat$ at each iteration of the loop.
  Precisely, the main task is to compute $\mat{A} \mat{B}$, where the matrices
  $\mat{A}$ and $\mat{B}$ have dimensions $\#\largeRows{} \times \nbun$ and
  $\nbun \times t$.  Then, adding this product to the submatrix
  $\submat{\cmat}{\largeRows{}}{\largeOrders{}}$ only costs $\#\largeRows{}
  \cdot t$ additions in $\field$.

  Consider $\gamma = \lceil\sumVec{\rdeg{\appbas}} / \vsdim\rceil \ge 1$
  (indeed, if $\sumVec{\rdeg{\appbas}} = 0$, then $\appbas$ is constant and
  $\cmat=\matz$).  By \cref{lem:control_nrows_ncols}, at the iteration $k$ we
  have $\#\largeRows{} \le \min( \nbun, \gamma\vsdim/ k)$ and $t =
  \#\largeOrders{} \le \min(\nbeq,\vsdim/k)$. We separate the cases $\nbeq \le
  \nbun$ and $\nbeq > \nbun$, and we use the bound
  $\lceil\gamma\vsdim/\nbun\rceil \in \bigO{\vsdim/\nbun}$, which comes from
  our assumptions $\nbun \in \bigO{\vsdim}$ and $\gamma \in \bigO{1}$.

  First, suppose $\nbeq\le \nbun$.  At the iterations $k < \lceil \gamma
  \vsdim/\nbun\rceil$ the matrices $\mat{A}$ and $\mat{B}$ both have dimensions
  at most $\nbun\times\nbun$, hence their product can be computed in
  $\bigO{\nbun^\expmatmul}$ operations.  These iterations have a total cost of
  $\bigO{\nbun^\expmatmul \lceil\gamma\vsdim/\nbun\rceil} \subseteq
  \bigO{\nbun^{\expmatmul-1} \vsdim}$.  At the iterations $k \ge
  \lceil\gamma\vsdim/\nbun\rceil$, $\mat{A}$ and $\mat{B}$ have dimensions at
  most $(\gamma\vsdim/k) \times \nbun$ and $\nbun \times (\vsdim/k)$, with
  $\vsdim/k \le \gamma\vsdim/k \le \nbun$; computing their product costs
  $\bigO{(\vsdim/k)^{\expmatmul-1} \nbun} \subseteq \bigO{\nbun
  \vsdim^{\expmatmul-1} k^{1-\expmatmul}}$.  Thus, the total cost for these
  iterations is in
  \begin{align*}
    & \bigOPar{
      \nbun \vsdim^{\expmatmul-1} \sum_{k=\lceil\gamma\vsdim/\nbun\rceil}^{\max(\orders)} k^{1-\expmatmul}
    } \\
    & \subseteq
    \bigOPar{
      \nbun \vsdim^{\expmatmul-1} (\lceil\gamma\vsdim/\nbun\rceil)^{2-\expmatmul} \textstyle\sum_{i=0}^{+\infty}  2^{i(2-\expmatmul)} 
    }
    \subseteq
    \bigO{\nbun^{\expmatmul-1} \vsdim}.
  \end{align*}
  For the first inclusion, we apply \cref{lem:bound_on_sum_powers} with $\mu =
  \lceil\gamma\vsdim/\nbun\rceil$, $\nu = \max(\orders)$, and $\theta = 1 -
  \expmatmul$. For the second, the sum is finite since $2^{2-\expmatmul} < 1$.
  Hence \cref{algo:certif_comp} costs $\bigO{\nbun^{\expmatmul-1} \vsdim}$ in
  the case $\nbeq\le\nbun$.

  Now, suppose $\nbeq>\nbun$.  At the iterations $k <
  \lceil\vsdim/\nbeq\rceil$, $\mat{A}$ and $\mat{B}$ have dimensions at most
  $\nbun\times\nbun$ and $\nbun\times\nbeq$, hence their product can be
  computed in $\bigO{\nbun^{\expmatmul-1}\nbeq}$.  The total cost is in
  $\bigO{\nbun^{\expmatmul-1}\vsdim}$ since there are $\lceil\vsdim/\nbeq\rceil
  - 1 < \vsdim/\nbeq$ iterations (with $\nbeq\le\vsdim$ by definition).  For
  the iterations $k \ge \lceil\gamma\vsdim/\nbun\rceil$, we repeat the analysis
  done above for the same values of $k$: these iterations cost
  $\bigO{\nbun^{\expmatmul-1}\vsdim}$ here as well.

  Finally, for the iterations $\lceil\vsdim/\nbeq\rceil \le k <
  \lceil\gamma\vsdim/\nbun\rceil$, $\mat{A}$ and $\mat{B}$ have dimensions at
  most $\nbun\times\nbun$ and $\nbun\times(\vsdim/k)$, with $\vsdim/k \le
  \nbeq$.  Thus the product $\mat{A} \mat{B}$ can be computed in
  $\bigO{\nbun^\expmatmul + \nbun^{\expmatmul-1}\vsdim/k}$ operations.  Summing
  the term $\nbun^\expmatmul$ over these $\bigO{\vsdim/\nbun}$ iterations
  yields the cost $\bigO{\nbun^{\expmatmul-1} \vsdim}$.  Summing the other term
  gives the cost $\bigO{\nbun^{\expmatmul-1} \vsdim \log(\nbeq/\nbun)}$
  since, by the last claim of \cref{lem:bound_on_sum_powers}, we have
  \[
    \sum_{k=\lceil\vsdim/\nbeq\rceil}^{\lceil\gamma\vsdim/\nbun\rceil-1} k^{-1}
    \;\le\;
    1+\left\lfloor\log_2\left(\frac{\lceil\gamma\vsdim/\nbun\rceil-1}{\lceil\vsdim/\nbeq\rceil}\right)\right\rfloor
    \;\le\;
    1 + \log_2(\gamma\nbeq/\nbun).
  \]
  Adding the costs of the three considered sets of iterations, we obtain the
  announced cost for \cref{algo:certif_comp} in the case $\nbeq>\nbun$ as well.
\end{proof}

\begin{lemma}
  \label{lem:bound_on_sum_powers}
  Given integers $0<\mu<\nu$ and a real number $\theta \le 0$,
  \[
    \sum_{k=\mu}^{\nu} k^\theta
    \;\;\le\;\;
    \mu^{\theta+1} \sum_{i=0}^{\ell-1} 2^{i(\theta+1)}
  \]
  holds, where $\ell = \lfloor\log_2(\nu/\mu)\rfloor+1$.  In particular,
  $\sum_{k=\mu}^{\nu} k^{-1} \le \ell$.
\end{lemma}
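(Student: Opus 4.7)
The plan is to partition the summation range $\{\mu, \mu+1, \ldots, \nu\}$ into dyadic blocks and bound each block separately. Specifically, I would consider, for $i = 0, 1, \ldots, \ell-1$, the block of integers $I_i = \{k \in \ZZ \mid 2^i \mu \le k < 2^{i+1}\mu\}$. The first verification is that these blocks cover every index in $\{\mu, \ldots, \nu\}$: by the definition $\ell = \lfloor \log_2(\nu/\mu)\rfloor + 1$, we have $\log_2(\nu/\mu) < \ell$, hence $2^\ell \mu > \nu$, so the union $I_0 \cup \cdots \cup I_{\ell-1}$ contains all integers from $\mu$ up to $\nu$.

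On each block $I_i$, I would then use two elementary estimates. First, $I_i$ has length $2^{i+1}\mu - 2^i \mu = 2^i \mu$, so it contains at most $2^i \mu$ integers. Second, because $\theta \le 0$, the function $k \mapsto k^\theta$ is non-increasing on the positive integers, so on $I_i$ it is maximized at the left endpoint $k = 2^i \mu$, giving $k^\theta \le (2^i\mu)^\theta = 2^{i\theta}\mu^\theta$. Combining these two facts yields
\[
  \sum_{k \in I_i} k^\theta \;\le\; 2^i \mu \cdot 2^{i\theta}\mu^\theta \;=\; \mu^{\theta+1}\, 2^{i(\theta+1)}.
\]
Summing this bound over $i = 0, 1, \ldots, \ell-1$ and using the fact that $\{\mu, \ldots, \nu\}$ is contained in the union of the $I_i$ produces the claimed main inequality.

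For the last assertion, I would specialize to $\theta = -1$: then $\theta + 1 = 0$, and each term on the right-hand side reduces to $\mu^0 \cdot 2^0 = 1$. The sum of $\ell$ ones is exactly $\ell$, giving $\sum_{k=\mu}^{\nu} k^{-1} \le \ell$.

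There is no real obstacle in this argument; it is a routine dyadic decomposition and the only point requiring attention is checking that the dyadic cover reaches $\nu$, which follows directly from the choice of $\ell$.
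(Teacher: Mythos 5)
Your proof is correct and follows essentially the same route as the paper's: a dyadic decomposition of $\{\mu,\ldots,\nu\}$ into blocks $[2^i\mu, 2^{i+1}\mu)$, bounding $k^\theta$ on each block by its value at the left endpoint (using monotonicity from $\theta\le 0$), and checking that $2^\ell\mu>\nu$ so the blocks cover the full range. The specialization to $\theta=-1$ for the last claim matches as well.
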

\begin{proof}
  Note that $\ell$ is chosen such that $2^{\ell}\mu -1 \ge \nu$.  Then, the
  upper bound is obtained by splitting the sum as follows:
  \[
    \sum_{k=\mu}^{\nu} k^\theta
    \le \sum_{i=0}^{\ell-1} \, \sum_{k=2^i\mu}^{2^{i+1}\mu-1} k^\theta
    \le \sum_{i=0}^{\ell-1} \, \sum_{k=2^i\mu}^{2^{i+1}\mu-1} (2^i\mu)^\theta 
    = \sum_{i=0}^{\ell-1} (2^i\mu)^{\theta+1},
  \]
  where the second inequality comes from the fact that $x \mapsto x^\theta$ is
  decreasing on the positive real numbers.
\end{proof}

Finally, we describe minor changes in \cref{algo:certif_comp} to deal with the
case of small average \emph{column} degree $\cdeg{\appbas}\in\bigO{\vsdim}$;
precisely, we replace the assumption $\sumVec{\rdeg{\appbas}} \le \gamma
\vsdim$ by $\sumVec{\cdeg{\appbas}} \le \gamma \vsdim$.  Then, instead of the
set $\largeRows{k}$ used above, we rather define
\[
  \largeCols{k} = \{j\in \{1,\dots,\nbun\} \mid \cdeg{\appbas_{*,j}} \geq k\}.
\]
Then we have the following lemma, analogous to \cref{lem:control_nrows_ncols}.
\begin{lemma}
  \label{lem:control_ncols_appbas}
  For $k \in \{1,\ldots,\nbun\}$ and $j \not\in \largeCols{k}$, the column $j$
  of $\appbas_k$ is zero.  In particular, $\appbas_k$ has at most
  $\#\largeCols{k} \le \gamma\vsdim/ k$ nonzero columns.
\end{lemma}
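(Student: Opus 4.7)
My plan is to mirror the proof of \cref{lem:control_nrows_ncols} almost verbatim, swapping the roles of rows and columns for the $\appbas_k$ side, and using the replacement assumption $\sumVec{\cdeg{\appbas}} \le \gamma \vsdim$ in place of the row-degree assumption. The geometric content is identical: a matrix whose column $j$ has degree $<k$ cannot contribute to the coefficient matrix at degree exactly $k$ in that column.

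More concretely, first I would unfold the definitions: the column $j$ of $\appbas_k$ is by definition the coefficient of degree $k$ in the column $j$ of $\appbas$, so if that coefficient is nonzero then $\cdeg{\matcol{\appbas}{j}} \ge k$, which by the definition of $\largeCols{k}$ means $j \in \largeCols{k}$. Contrapositively, $j \notin \largeCols{k}$ forces the column $j$ of $\appbas_k$ to be zero. This immediately yields the first assertion of the lemma and, consequently, the bound on the number of nonzero columns of $\appbas_k$ by $\#\largeCols{k}$.

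For the cardinality bound, I would observe that every $j \in \largeCols{k}$ contributes at least $k$ to $\sumVec{\cdeg{\appbas}}$, so
\[
    k \cdot \#\largeCols{k} \;\le\; \sumVec{\cdeg{\appbas}} \;\le\; \gamma \vsdim,
\]
whence $\#\largeCols{k} \le \gamma \vsdim / k$, as claimed.

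There is essentially no obstacle here: the only thing to verify is that the substitution of assumptions is the only change needed in the argument of \cref{lem:control_nrows_ncols}, and that the statement indeed only concerns $\appbas_k$ (the analogous control on $\syss_{\order-k}$ via $\largeOrders{k}$ is unchanged and can be inherited directly from \cref{lem:control_nrows_ncols}). Hence the proof will be a short one-paragraph argument.
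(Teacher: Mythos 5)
Your proof is correct and matches the intended argument exactly: the paper omits a formal proof, noting only that the lemma is ``analogous to \cref{lem:control_nrows_ncols}'', and your reconstruction is precisely that analogue with rows replaced by columns and $\sumVec{\rdeg{\appbas}}$ replaced by $\sumVec{\cdeg{\appbas}}$. Nothing is missing.
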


Thus, we can modify \cref{algo:certif_comp} to take into account the column
degree of $\appbas$ instead of its row degree.  This essentially amounts to
redefining the matrices $\mat{A}$ and $\mat{B}$ in the loop as follows:
\begin{itemize}
  \item  $\mat{A} \in \matSpace[\nbun][\#\largeCols{k}]$ is the
    coefficient  of degree  $k$ of  $\matcol{\appbas}{\largeCols{k}}$.
  \item $\mat{B} \in \matSpace[\#\largeCols{k}][t]$ is such that for all
    $i\in\largeCols{k}$ and $1 \le j \le t$, $\mat{B}_{i,j}$ is the coefficient
    of degree $\order_j-k$ of $\sys_{i,c_j}$  
\end{itemize}

These modifications have obviously no impact on the correctness.  Furthermore,
it is easily verified that the same cost bound holds since we obtain a similar
matrix multiplication cost at each iteration.

\section{Perspectives}
\label{sec:perspectives}

As noted in the introduction, our certificate is almost optimal since we can
verify it at a cost $\bigO{\nbun\vsdim+\nbun^{\expmatmul-1}(\nbun+\nbeq)}$
while the input size is $\nbun\vsdim$.  One should notice that the extra term
$\bigO{\nbun^{\expmatmul-1}(\nbun+\nbeq)}$ corresponds to certifying problems
of linear algebra over $\field$, namely the rank and the determinant.  These
could actually be dealt with in $\bigO{\nbun(\nbun+\nbeq)}$ operations using
interactive certificates built upon the results in
\cite{KalNehSau:2011,DumKal:2014,DumLucPer:2017}, thus yielding an optimal
certificate. Still, for practical applications, our simpler certification
should already be significantly faster than the approximant basis computation,
since the constants involved in the cost are small as we have observed in our
estimates above.  We plan to confirm this for the approximant bases
implementations in the LinBox library.

Finally, our verification protocol needs $(m+2)\log_2(\#S)$ random bits,
yielding a probability of failure less than $\frac{\vsdim+1}{\#\fsubset}$.  The
majority of these bits is required by \cref{algo:verif_truncmatprod} when
choosing $\nbun$ random elements for the vector $\row{u}$. As proposed in
\cite{KIMBREL1993}, it may be worthwhile to pick a single random value $\zeta$
and to use $\row{u}=[1 \;\, \zeta \;\, \cdots \;\, \zeta^{\nbun-1}]$. In the
case where $\max(\orders) < \vsdim/2$, this choice would not affect the
probability of failure while decreasing the number of random bits to
$3\log_2(\#\fsubset)$.  In particular, at the price of the same number of bits
as we currently use in our algorithm, we could run our verification
$(\nbun+2)/3$ times and decrease the probability of failure to
$(\frac{\vsdim+1}{\#\fsubset})^{\frac{\nbun+2}{3}}$.


\end{document}